\title{The Parametrized Complexity of~the~Segment~Number}
\author{Sabine Cornelsen}{University of Konstanz, Germany}{sabine.cornelsen@uni-konstanz.de}{https://orcid.org/0000-0002-1688-394X}{}
\author{Giordano {Da Lozzo}}{Roma Tre University,
  Italy}{}{https://orcid.org/0000-0003-2396-5174}{Supported, in part,
  by MUR of Italy (PRIN Project no.~2022ME9Z78~-- NextGRAAL and PRIN
  Project no.~2022TS4Y3N~-- EXPAND).}
\author{Luca Grilli}{Universit\`a degli Studi di Perugia,
  Italy}{}{https://orcid.org/0000-0002-2463-3772}{Supported, in part,
  by the Dipartimento di Ingegneria, Universit\`a degli Studi di
  Perugia through grant RICBA21LG.}
\author{Siddharth Gupta}{BITS Pilani, Goa Campus,
  India}{}{https://orcid.org/0000-0003-4671-9822}{Supported by the
  Engineering and Physical Sciences Research Council (EPSRC) through
  grant no.\ EP/V007793/1.}
\author{Jan Kratochv\'il}{Charles University, Prague, Czech
  Republic}{}{https://orcid.org/0000-0002-2620-6133}{Supported by the
  Czech Science Foundation through grant GA\v{C}R 23-04949X.}
\author{Alexander Wolff}{Universit\"at W\"urzburg, Germany}{}{https://orcid.org/0000-0001-5872-718X}{}
\authorrunning{S.~Cornelsen, G.~Da Lozzo, L.~Grilli, S.~Gupta,
  J.~Kratochv\'il, and A.~Wolff}
\Crefname{algorithm}{Algorithm}{Algorithms}
\renewcommand{\emph}[1]{\textcolor{blue}{\em #1}}
\newlength{\RoundedBoxWidth}
\newsavebox{\GrayRoundedBox}
\newenvironment{GrayBox}[1]%
   {\setlength{\RoundedBoxWidth}{.93\columnwidth}
    \def\boxheading{#1}
    \begin{lrbox}{\GrayRoundedBox}
       \begin{minipage}{\RoundedBoxWidth}}%
   {   \end{minipage}
    \end{lrbox}
    \begin{center}
    \begin{tikzpicture}%
       \node(Text)[draw=black!20,fill=white,rounded corners,inner xsep=2ex,inner ysep=2ex,text width=\RoundedBoxWidth]
             {\usebox{\GrayRoundedBox}};
        \coordinate(x) at (current bounding box.north west);
        \node [draw=white,rectangle,inner sep=3pt,anchor=north west,fill=white]
        at ($(x)+(6pt,.75em)$) {\boxheading};
    \end{tikzpicture}
    \end{center}}
\newenvironment{defproblemx}[2]{\noindent\ignorespaces%
                                \FrameSep=6pt%
                                \parindent=0pt%
                 \vspace*{1ex}
                \begin{GrayBox}{#1}%
                \begin{tabular*}{\columnwidth}{!{\extracolsep{\fill}}@{\hspace{.1em}} >{\itshape} p{#2} p{0.84\columnwidth} @{}}%
            }{\\[-1.5ex]
                \end{tabular*}%
                \end{GrayBox}%
                \ignorespacesafterend
                \vspace*{1ex}
            }
\newcommand{\parametrizedProblem}[4]{%
  \begin{defproblemx}{#1}{1.6cm}
    Input: & #2 \\
    Parameter: & #3 \\
    Question: & #4
  \end{defproblemx}
}
\DeclareMathOperator{\seg}{seg}
\DeclareMathOperator{\lin}{line}
\keywords{segment number, line cover number, vertex cover number,
  parameterized complexity, visual complexity}
\begin{document}
	
\maketitle
	
\begin{abstract}
  Given a straight-line drawing of a graph, a {\em segment} is a
  maximal set of edges that form a line segment.  Given a planar
  graph~$G$, the {\em segment number} of $G$ is the minimum number of
  segments that can be achieved by any planar straight-line drawing of
  $G$.  The {\em line cover number} of $G$ is the minimum number of
  lines that support all the edges of a planar straight-line drawing
  of $G$.  Computing the segment number or the line cover number of a
  planar graph is $\exists\mathbb{R}$-complete and, thus, NP-hard.
        
  We study the problem of computing the segment number from the
  perspective of parameterized complexity. We show that this problem is
  fixed-parameter tractable with respect to each of the following
  parameters: the vertex cover number, the segment number, and the
  line cover number.  We also consider colored versions of the
  segment and the line cover number.
\end{abstract}

\section{Introduction}

A \emph{segment} in a straight-line drawing of a graph is a
maximal set of edges that together form a line segment.  The
\emph{segment number} $\seg(G)$ of a planar
graph~$G$ is the minimum number of segments in any planar
straight-line drawing of~$G$~\cite{dujmovic_etal:compGeo07}.
The \emph{line cover number} $\lin(G)$ of a planar
graph~$G$ is the minimum number of lines that support all the edges of
a planar straight-line drawing of~$G$~\cite{chaplick_etal:CompGeo20}.
Clearly, $\lin(G) \le \seg(G)$ for any
graph~$G$~\cite{chaplick_etal:CompGeo20}.  As a side note, we show
that $\seg(G) \le \lin^2(G)$ for any connected graph~$G$.
For circular-arc drawings of planar graphs, the \emph{arc
  number}~\cite{s-dgfa-JGAA15} and \emph{circle cover
  number}~\cite{krw-dgfcf-JGAA19} are defined analogously as the
segment number and the line cover number, respectively, for straight-line
drawings.  For an example, see \cref{fig:numbers}.  All these numbers
have been considered as meaningful measures of the visual complexity of a drawing
of a graph; in particular, for the segment number, a user study has been conducted by Kindermann et al.~\cite{kindermannMS:JGAA2012}.

\begin{figure}[tb]
  \centering
  \includegraphics{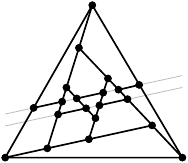} \hfil
  \includegraphics{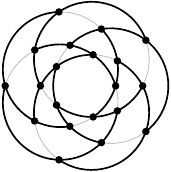}
  \caption{The dodecahedron graph has 30 edges, segment number~13,
    line cover number~10, arc number~10, and circle cover
    number~5~\cite{krw-dgfcf-JGAA19}.}
  \label{fig:numbers}
\end{figure}

In general, it is $\exists \mathbb{R}$-complete \cite{orw-ysng-GD19}
(and hence NP-hard) to compute the segment number of a planar graph.
(The complexity class $\exists \mathbb{R}$ was introduced by
Schaefer~\cite{Schaefer09}.)
The segment number can, however, be computed efficiently for
trees~\cite{dujmovic_etal:compGeo07},
series-parallel graphs of degree at most three~\cite{samee_etal:GD08},
subdivisions of outerplanar paths~\cite{Adnan2008}, 3-connected cubic
planar graphs~\cite{dujmovic_etal:compGeo07,igamberdievMS:jgaa17}, and
cacti~\cite{Goessmann_etal:WG22}.  Upper and lower bounds for the
segment number of various graph classes are known, such as outerplanar
graphs, 2-trees, planar 3-trees, 3-connected plane
graphs~\cite{dujmovic_etal:compGeo07}, (4-connected)
triangulations~\cite{durocher/mondal:compGeo2019}, and triconnected
planar 4-regular graphs~\cite{Goessmann_etal:WG22}.  Some of the lower
and upper bounds are so close that they yield constant-factor
approximations, e.g., for outerplanar paths, maximal outerplanar
graphs, 2-trees, and planar 3-trees~\cite{Goessmann_etal:WG22}.
Segment number and arc number have also been investigated under the
restriction that vertices must be placed on a polynomial-size grid
\cite{s-dgfa-JGAA15,kindermann_etal:GD19,hueltenschmidtKMS:jgaa18}.
Also in the setting that a planar graph~$G$ comes with a planar
embedding (that is, for each face, the cyclic ordering of the edges
around it is given and the outer face is fixed), it is NP-hard to
compute the segment number of~$G$ with respect to the given
embedding~\cite{durocherMNW:jgaa13}.

This paper focuses on the parametrized complexity of computing the
segment number of a graph. A decision problem with input $x$ and
parameter $k \in \mathbb{Z}_{\geq 0}$ is \emph{fixed-parameter
  tractable} (\emph{FPT}) if it can be solved by an algorithm with run 
time in $O(f(k)\cdot |x|^c)$ where $f$ is a computable
function, $|x|$ is the size of the input $x$ and $c$ is a constant. 
Given a planar graph $G$ and a parameter $k > 0$, the \textsc{Segment
  Number} problem is to decide whether $\seg(G) \leq k$. This is the
\emph{natural parameter} for the problem.  By considering additional
parameters, we get a more fine-grained picture of the complexity
of the problem.

Chaplick et al.~\cite{cflrvw-cdgfl-JGAA23} showed that
computing the line cover number of a planar graph is
in FPT with respect to its natural parameter.  They observed that
for a given graph~$G$ and an integer~$k$, the statement $\lin(G)\le k$
can be expressed by a first-order formula about the reals.  This
observation shows that the problem of deciding whether or not $\lin(G)
\le k$ lies in $\exists\mathbb{R}$: it reduces in polynomial time to
the decision problem for the existential theory of the reals.
The algorithm of Chaplick et al.\ crucially uses the
exponential-time decision procedure for the existential theory of the
reals by Renegar~\cite{Renegar92a,Renegar92b,Renegar92c} (see
\cref{sec:preliminaries}).  Unfortunately, this
procedure does {\em not} yield a geometric realization.  Chaplick et
al.\ even showed that constructing a drawing of a given planar graph
that is optimal with respect to the line cover number can be
unfeasible since there is a planar
graph~$G^\star$~\cite[Fig.~3b]{cflrvw-cdgfl-JGAA23} such that every optimal
drawing of~$G^\star$ requires irrational coordinates.  They argue that any
optimal drawing of~$G^\star$ contains the Perles configuration.  It is known
that every realization of the Perles configuration contains a point
with an irrational coordinate~\cite[page 23]{Berger2010}.  Moreover,
any optimal drawing of~$G^\star$ admits a cover with ten lines, and each of
these lines contains a {\em single} line segment of the drawing.  In other
words, every drawing of~$G^\star$ that is optimal with
respect to the line cover number is also optimal with respect to the
segment number.

\subparagraph{Our results}

We show that computing the segment number of
a graph is FPT with respect to each of the following parameters:
the natural parameter, the line cover number (both in
\cref{sec:segment-number}), and the \emph{vertex cover number} (in
\cref{sec:vertex-cover-number}).  Recall that the vertex cover number
is the minimum number of vertices that have to be removed such that
the remaining graph is an independent set.
In the parametrized complexity community, the vertex cover number is
considered a rather weak graph parameter, but for (geometric) graph
drawing problems, FPT results can be challenging to obtain even
with respect to the vertex cover number
\cite{bgmn-pabep-JGAA20,bgmn-paql-JGAA22,bcghvw-bcong-ESA22}.

We remark that our algorithms use Renegar's decision procedure as a
subroutine; hence, when we compute the segment number~$k$ of a planar
graph~$G$, we do {\em not} get a straight-line drawing of~$G$ that
consists of $k$ line segments.  Recall, however, that even {\em
  specifying} such a drawing is difficult for some graphs such as the
above-mentioned graph~$G^\star$, which has a point with irrational
coordinates in any drawing that is optimal with respect to the segment
number.

Motivated by list coloring, we also consider colored versions of the
segment and line cover number problems.
As a warm up, we provide efficient algorithms for computing the
segment number of \emph{banana trees} and \emph{banana cycles}, that
is, graphs that can be obtained from a tree or a cycle, respectively,
by replacing each edge by a set of paths of length two; see
\cref{sec:banana-graphs}.

\section{Preliminaries}
\label{sec:preliminaries}

In a straight-line drawing of a graph, two incident edges are \emph{aligned} if they are on the same segment.  Since the number of segments equals the number of edges minus the number of pairs of aligned edges,
we get the following. 

\begin{lemma}\label{LEMMA:max-aligned}
	A planar straight-line drawing has the minimum number of segments if and only if it has the
	maximum number of pairs of aligned edges.
\end{lemma}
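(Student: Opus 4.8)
The plan is to turn the sentence preceding the statement into a precise counting argument. Concretely, I would fix an arbitrary planar straight-line drawing $\Gamma$ of~$G$, let $s(\Gamma)$ denote its number of segments and $a(\Gamma)$ its number of pairs of aligned edges, and prove the identity $s(\Gamma) = |E(G)| - a(\Gamma)$. Because $|E(G)|$ depends only on~$G$ and not on~$\Gamma$, this identity shows that $s(\Gamma)$ is minimized over all planar straight-line drawings of~$G$ exactly when $a(\Gamma)$ is maximized, which is precisely the claim of the lemma.

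To establish the identity, I would first observe that the segments partition the edge set of~$\Gamma$: by maximality, every edge lies on exactly one segment. Next I would analyze the internal structure of a single segment~$S$ consisting of $m$ edges. Since all these edges are collinear and together form one line segment of the drawing, and since distinct vertices occupy distinct points, the edges of~$S$ must be arranged end to end along their supporting line; hence $S$ is a simple path $v_0 v_1 \cdots v_m$. In such a path the only incident (that is, vertex-sharing) pairs of edges are the $m-1$ consecutive pairs $\{v_{i-1}v_i,\, v_i v_{i+1}\}$, and each of these is by definition a pair of aligned edges. Thus a segment with $m$ edges contributes exactly $m-1$ pairs of aligned edges.

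Finally, I would sum over all segments. Writing $m_1, \dots, m_{s(\Gamma)}$ for the numbers of edges of the individual segments, we have $\sum_i m_i = |E(G)|$ because the segments partition the edges, so the total number of aligned pairs is $a(\Gamma) = \sum_i (m_i - 1) = |E(G)| - s(\Gamma)$; here it is used that a pair of aligned edges, being incident and on a common segment, is counted for exactly one segment. Rearranging gives the desired identity. The one point that deserves care --- and that I regard as the main obstacle --- is the claim that two aligned edges on a common segment must be consecutive on it, i.e.\ that a segment is a simple path rather than some more complicated collinear configuration; this is exactly where the hypotheses that $\Gamma$ is a valid straight-line drawing (no overlapping edges, distinct vertex positions) enter.
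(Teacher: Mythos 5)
Your proposal is correct and follows exactly the paper's reasoning: the paper justifies the lemma with the single observation that the number of segments equals the number of edges minus the number of pairs of aligned edges, which is precisely the identity $s(\Gamma) = |E(G)| - a(\Gamma)$ you establish. Your write-up merely fills in the details (segments partition the edges, and each segment with $m$ edges is a simple path contributing exactly $m-1$ aligned pairs) that the paper leaves implicit.
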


An \emph{existential first-order formula about the reals} is a formula
of the form $\exists x_1\dots\exists x_m$ $\phi(x_1,\dots,x_m)$, where
$\phi$ consists of Boolean combinations of order and equality relations
between polynomials with rational coefficients over the variables
$x_1,\dots,x_m$.
Renegar's result on the existential theory of the reals can be
summarized as follows.

\begin{theorem}[Renegar~\cite{Renegar92a,Renegar92b,Renegar92c}]
  \label{thm:renegar}
  Given any existential sentence $\Phi$ about the reals, one can
  decide whether $\Phi$ is true or false in time
  \[
    (L\log L\log\log L) \cdot (PD)^{O(N)},
  \]
  where~$N$ is the number of variables in~$\Phi$,~$P$ is the
  number of polynomials in~$\Phi$, $D$ is the maximum total
  degree over the polynomials in~$\Phi$, and~$L$ is the maximum length
  of the binary representation over the coefficients of the polynomials
  in~$\Phi$.
\end{theorem}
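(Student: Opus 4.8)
The plan is first to recognize that this is Renegar's theorem on the existential theory of the reals, established in a series of three papers and here invoked as a black box; so rather than reprove it, I would only indicate how such a single-exponential decision procedure is obtained. Write the input as polynomials $f_1,\dots,f_P \in \mathbb{Q}[x_1,\dots,x_N]$ together with a Boolean combination of sign conditions $(f_i \bowtie 0)$, with $\bowtie \in \{<,=,>\}$. Deciding whether $\Phi$ holds is exactly deciding whether the semialgebraic set $S \subseteq \mathbb{R}^N$ cut out by the formula is nonempty.

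The key idea I would use is the \emph{critical-point method}. Instead of eliminating quantifiers one variable at a time (which, via cylindrical algebraic decomposition, costs a doubly-exponential $(PD)^{2^{O(N)}}$), one reduces emptiness testing to producing at least one sample point in every connected component of $S$: then $S \neq \emptyset$ if and only if some sample point satisfies all the sign conditions. The structural fact that makes this efficient is that each connected component contains a critical point of a generic distance function restricted to a variety defined by a subset of the $f_i$; after an infinitesimal perturbation that puts the relevant systems in general position and makes them zero-dimensional, these critical points are the common zeros of a square polynomial system. Using resultants and a rational univariate representation, their coordinates are encoded as roots of univariate polynomials whose degree is $D^{O(N)}$ and whose total number is $(PD)^{O(N)}$.

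It then remains to determine, for each candidate sample point, the sign of each $f_i$ at that real algebraic number and to test the Boolean combination; this is a sign-determination subroutine over the roots of the univariate representation. Counting the $(PD)^{O(N)}$ sample points, each handled with polynomials of degree $D^{O(N)}$, yields the stated $(PD)^{O(N)}$ dependence, while the factor $L\log L\log\log L$ absorbs fast arithmetic on integers whose bit length is controlled by the coefficient length~$L$. The main obstacle --- and the reason this is a theorem rather than an exercise --- is keeping the number of polynomials and their degrees under control throughout the construction: obtaining the single-exponential $(PD)^{O(N)}$ bound, as opposed to the doubly-exponential bound of full quantifier elimination, is precisely what Renegar's careful critical-point analysis achieves.
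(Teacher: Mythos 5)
The paper does not prove this statement at all: it is Renegar's theorem, imported verbatim as a cited black box, and your proposal treats it exactly the same way, so you are in agreement with the paper's (non-)proof. Your expository sketch of why the single-exponential bound holds is a fair summary of the sample-point paradigm, though as a historical note the ``critical points of a generic distance function'' formulation you describe is closer to Grigoriev--Vorobjov and Basu--Pollack--Roy than to Renegar's own resultant-based construction; since neither you nor the paper needs to reprove the theorem, this makes no difference to correctness.
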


The proof of the next lemma follows the approach by Chaplick et
al.~\cite[Lemma~2.2]{cflrvw-cdgfl-JGAA23}.

\begin{lemma}\label{lem:firstOrder}
	Given an $n$-vertex planar graph $G$ and an integer $k \le 3n-6$,
	there exists a first-order formula~$\Phi$ about the reals that
	involves $O(n^2)$ polynomials in $O(n)$ variables (each of constant
	total degree and with integer coefficients of constant absolute
	value) such that $\Phi$ is satisfiable if and only if
	$\seg(G) \leq k$.
\end{lemma}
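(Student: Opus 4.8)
The plan is to follow the line-cover encoding of Chaplick et al.~\cite{cflrvw-cdgfl-JGAA23} and to build a single existential sentence~$\Phi$ that guesses a planar straight-line drawing of~$G$ and certifies that it uses at most~$k$ segments. We may assume $k\le|E|$, where $E$ denotes the edge set of~$G$ (otherwise $\seg(G)\le|E|\le k$ trivially, and~$\Phi$ can be taken to encode the mere existence of a crossing-free straight-line drawing, which exists by F\'ary's theorem). For the variables I would use $2n$ reals $x_v,y_v$ for the coordinates of each vertex~$v$, so that a point assignment $p_v=(x_v,y_v)$ encodes a drawing; and, to certify the segment count, one variable $z_{e,u}$ for every incident pair consisting of an edge~$e$ and an endpoint~$u$ of~$e$. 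Since there are $2|E|\le 2(3n-6)$ such edge-ends, the total number of variables is $2n+2|E|=O(n)$.

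First I would encode that the point assignment is a crossing-free straight-line drawing. This is the routine part: I require that all $p_v$ be pairwise distinct and that every pair of edges intersect only in a common endpoint. Each such condition is a Boolean combination of sign conditions on orientation determinants $\det(p_b-p_a,\,p_c-p_a)$, which are polynomials of degree~$2$ with coefficients in $\{-1,0,1\}$; ranging over the $O(n^2)$ pairs of edges and pairs of vertices this contributes $O(n^2)$ polynomials of constant degree. Note that I do not need to fix a planar embedding, because $\seg(G)$ is the minimum over \emph{all} planar straight-line drawings, so crossing-freeness is all that is required; in particular, incident edges meeting only at their shared vertex rules out collinear overlaps.

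The conceptual heart of the construction is the counting gadget that replaces the ``edge lies on one of $k$ lines'' gadget of the line-cover encoding. By \cref{LEMMA:max-aligned} the number of segments equals $|E|$ minus the number of aligned (incident, collinear, oppositely directed) pairs of edges; writing $a_v$ for the number of aligned pairs at~$v$, the number of segments is $|E|-\sum_v a_v$ and the number of segment endpoints is $2|E|-2\sum_v a_v$. I read $z_{e,u}$ as the indicator ``edge $e=uv$ continues straight through its endpoint~$u$,'' and enforce (i) $z_{e,u}(z_{e,u}-1)=0$, so that $z_{e,u}\in\{0,1\}$, and (ii) that $z_{e,u}=1$ is justified by an oppositely directed collinear edge at~$u$; formally,
\[
  z_{e,u}=1 \;\Longrightarrow\; \bigvee_{uw\in E,\; w\neq v}\bigl(\det(p_v-p_u,\,p_w-p_u)=0 \;\wedge\; \langle p_v-p_u,\,p_w-p_u\rangle<0\bigr).
\]
Both the collinearity and the ``opposite side'' dot-product conditions are degree-$2$ polynomials with coefficients in $\{-1,0,1\}$, and the disjunction sizes sum to $\sum_u \deg(u)(\deg(u)-1)=O(n^2)$. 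Finally I add the single linear inequality $\sum_{(e,u)} z_{e,u}\ge 2(|E|-k)$, which bounds the number of segment endpoints by~$2k$.

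The correctness argument is where the main subtlety lies, so I would treat it carefully. Because the $z_{e,u}$ are existentially quantified, a satisfying assignment is free to set each to~$1$ whenever~(ii) permits it. In a crossing-free straight-line drawing each edge-end has \emph{at most one} oppositely directed collinear partner, so the number of edge-ends that can be set to~$1$ is exactly the number of interior edge-ends, namely $2\sum_v a_v$; hence the largest justifiable value of $\sum z_{e,u}$ is $2\sum_v a_v$. Condition~(ii) genuinely prevents cheating (a variable can be~$1$ only at a true interior point of a segment), giving $\sum z_{e,u}\le 2\sum_v a_v$, while the existential quantifier supplies the matching lower bound. Thus $\Phi$ is satisfiable if and only if some planar straight-line drawing achieves $\sum_v a_v\ge|E|-k$, i.e.\ at most~$k$ segments. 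Collecting everything into prenex existential form yields $\Phi$ with $O(n)$ variables and $O(n^2)$ polynomials of constant total degree; every geometric polynomial has coefficients in $\{-1,0,1\}$, and the only larger coefficient is the constant threshold $2(|E|-k)=O(n)$ of the counting inequality, which increases the coefficient length~$L$ by an additive $O(\log n)$ but affects neither the degree nor the number of polynomials.
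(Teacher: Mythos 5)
Your proof is correct in substance but takes a genuinely different route from the paper's. The paper (following Chaplick et al.~\cite[Lemma~2.2]{cflrvw-cdgfl-JGAA23}) represents the $k$ segments explicitly: it existentially quantifies $k$ pairs of points determining a set of $k$ lines, together with the $n$ vertex points, and asserts that every edge lies on one of these lines and that the edges neither overlap nor cross nor leave gaps on the lines~--~the ``no gaps'' condition is what turns the line-cover encoding into a segment-count encoding. You never represent segments geometrically at all: you quantify only the vertex coordinates plus one $\{0,1\}$-valued variable per edge-end, force each such variable to be witnessed by a collinear, oppositely directed incident edge whenever it is set to~$1$, and impose a single cardinality constraint, invoking the identity behind \cref{LEMMA:max-aligned} (the number of segments equals the number of edges minus the number of aligned pairs) together with the observation that in an overlap-free drawing every edge-end has at most one straight continuation. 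This is a clean, self-contained alternative: it avoids any appeal to the details of the line-cover encoding, the variable set does not depend on $k$, and correctness reduces to a purely combinatorial count rather than to coverage bookkeeping on $k$ guessed lines. What the paper's encoding buys is literal compliance with the statement: all of its polynomials have coefficients of constant absolute value, whereas your threshold inequality $\sum_{(e,u)} z_{e,u}\ge 2(|E|-k)$ carries a constant of magnitude $O(n)$, so as written you prove a marginally weaker statement than the lemma claims. You flag this yourself, and it is indeed harmless for \cref{lem:expTimeAlgo}, since it enters Renegar's bound only through $L=O(\log n)$; still, if you want the lemma verbatim, replace the single inequality by a unary counter~--~partial sums $s_i=s_{i-1}+z_i$ over the edge-ends and a chain $c_i=c_{i-1}+1$ of length $2(|E|-k)$, ending with the comparison $s_{2|E|}\ge c_{2(|E|-k)}$~--~which costs only $O(n)$ additional variables and polynomials, all with coefficients in $\{-1,0,1\}$.
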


\begin{proof}
	Note that we can return true immediately if $k > 3n-6$.  Otherwise,
	the formula $\Phi$ needs to model the fact that there are $k$ pairs
	of points, determining a set $\mathcal{L}$ of $k$ lines, and that
	there are $n$
	distinct points representing the vertices of $G$ such that the segments
	corresponding to the edges of $G$ lie on the lines in~$\mathcal{L}$
	and do not overlap or cross each other or leave gaps on the
	lines.  For details on how to encode this, see
	\cite[Lemma~2.2]{cflrvw-cdgfl-JGAA23}.
\end{proof}

\cref{lem:firstOrder,thm:renegar} immediately imply the following.

\begin{corollary}\label{lem:expTimeAlgo}
  Given an $n$-vertex planar graph $G$ and an integer $k$, there exists a
  $2^{O(n)}$-time algorithm to decide whether $\seg(G) \le k$.
\end{corollary}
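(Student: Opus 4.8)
The plan is to derive the statement directly from the two results it cites. Lemma~\ref{lem:firstOrder} already reduces the decision ``$\seg(G) \le k$'' to the satisfiability of an explicit existential first-order formula $\Phi$ about the reals, and Theorem~\ref{thm:renegar} supplies a decision procedure for exactly such formulas whose running time is controlled by four quantities: the number of variables $N$, the number of polynomials $P$, the maximum total degree $D$, and the maximum coefficient bit-length $L$. Hence the entire argument amounts to reading off these four parameters for the formula produced by Lemma~\ref{lem:firstOrder} and substituting them into Renegar's bound.

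First I would dispose of the trivial case. Since every $n$-vertex planar graph has at most $3n-6$ edges, it has at most $3n-6$ segments, so whenever $k > 3n-6$ we may answer ``yes'' in constant time. For $k \le 3n-6$, Lemma~\ref{lem:firstOrder} yields a formula $\Phi$, satisfiable if and only if $\seg(G) \le k$, whose parameters in the notation of Theorem~\ref{thm:renegar} are $N = O(n)$, $P = O(n^2)$, $D = O(1)$, and $L = O(1)$; the last two hold because the lemma guarantees constant total degree and integer coefficients of constant absolute value. The bound $N = O(n)$ is the crucial one: it follows because a drawing is encoded by two coordinates per vertex and a constant number of parameters per line, and the restriction $k \le 3n-6$ keeps the number of lines linear in~$n$. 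This is precisely what prevents the number of variables (and therefore the exponent in Renegar's bound) from growing with the target value~$k$.

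Substituting these parameters into $(L\log L\log\log L)\cdot(PD)^{O(N)}$, the first factor is a constant and the second is $(n^{O(1)})^{O(n)}$, which is singly exponential in~$n$ and thus gives the claimed $2^{O(n)}$-time algorithm. I do not expect a genuine obstacle here, since the statement is a corollary: the content is entirely in the bookkeeping of the four parameters and in checking the edge case $k > 3n-6$. The one point that genuinely requires the restriction $k \le 3n-6$ is keeping $N$ linear in~$n$; without it, the number of lines (hence variables) could be as large as~$k$ and the exponent would depend on~$k$ rather than on~$n$. I would also remark, for completeness, that this procedure only \emph{decides} the inequality and does not output a drawing, consistent with the limitation of Renegar's method noted earlier in the paper.
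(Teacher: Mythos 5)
Your proposal is correct and takes exactly the paper's route: the paper derives this corollary \emph{immediately} from \cref{lem:firstOrder} and \cref{thm:renegar}, which is precisely the parameter bookkeeping you carry out, and your handling of the trivial case $k > 3n-6$ mirrors the first line of the paper's proof of \cref{lem:firstOrder}. One pedantic caveat, which you share with the paper rather than introduce yourself: substituting $P = O(n^2)$, $D = O(1)$, $N = O(n)$, $L = O(1)$ into $(L\log L\log\log L)\cdot(PD)^{O(N)}$ literally yields $n^{O(n)} = 2^{O(n\log n)}$, so calling the result $2^{O(n)}$ is a slight overstatement unless one reads ``singly exponential'' loosely.
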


In the remainder of this paper, for a graph $G$, we write $V(G)$ for
the vertex set of~$G$ and $E(G)$ for the edge set of~$G$.

\section{The Segment Number of Banana Trees and Cycles}
\label{sec:banana-graphs}

We first consider some introductory examples.  A \emph{banana} is the
union of paths that share only their start- and endpoints
\cite{ss-isglc-JCTSB20}.  In this paper, we additionally insist that
all paths have length~2.  We say that a banana is a
\emph{$k$-banana} if it consists of $k$ paths (of length~2).  We call
the joint endpoints of the paths \emph{covering vertices} and the
vertices in the interior of the paths \emph{independent vertices}.
\emph{Banana graphs}, i.e., graphs in which some edges are replaced by bananas will play an important role in \cref{sec:vertex-cover-number} when we consider the parameter vertex cover.

\begin{lemma}[\cite{dujmovic_etal:compGeo07}]
	A $k$-banana has segment number $\lfloor3k/2\rfloor$.
\end{lemma}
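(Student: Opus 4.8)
The plan is to count edges and segments in an optimal drawing of a $k$-banana, using \cref{LEMMA:max-aligned} to reframe the task as maximizing aligned pairs. A $k$-banana has two covering vertices, call them $u$ and $w$, and $k$ independent vertices $v_1,\dots,v_k$, each joined to both $u$ and $w$, so the graph has $2k$ edges. Since minimizing segments is equivalent to maximizing the number of aligned incident edge-pairs, and each aligned pair reduces the segment count by one, I would express $\seg = 2k - (\text{number of aligned pairs})$ and then pin down the maximum number of aligned pairs achievable over all planar straight-line drawings.

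First I would determine at which vertices alignment can occur. At an independent vertex $v_i$, the two incident edges $uv_i$ and $v_iw$ can be aligned, contributing one pair; geometrically this means placing $v_i$ on the open segment $\overline{uw}$. However, at most one independent vertex can have its two edges aligned in this way, because a planar drawing cannot place two distinct interior vertices both on the straight segment $\overline{uw}$ without the two resulting paths overlapping. At the covering vertices $u$ and $w$, two of the edges emanating from them can be collinear (forming a single segment through the vertex), and I would count how many such alignments are simultaneously realizable there. The key combinatorial observation is that the edges around $u$ (and symmetrically around $w$) can be paired up into aligned pairs: each such alignment at $u$ means two edges $uv_i$ and $uv_j$ lie on a common line through $u$, which is realizable in a planar drawing as long as $v_i$ and $v_j$ are placed on opposite sides of $u$ along that line.

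The counting then goes as follows. Pairing the $k$ edges at $u$ into $\lfloor k/2 \rfloor$ aligned pairs (and likewise at $w$) gives $2\lfloor k/2 \rfloor$ aligned pairs from the covering vertices, plus possibly one more from a single independent vertex placed on $\overline{uw}$. I would verify that both the even and odd cases yield $2k - \lfloor 3k/2 \rfloor$ segments: when $k$ is even, the covering vertices supply $k$ aligned pairs and no independent-vertex alignment is needed, giving $2k - k = k = \lfloor 3k/2 \rfloor$ segments after a short recount; when $k$ is odd, one edge at each covering vertex is left unpaired, and aligning one independent vertex recovers the extra pair, matching the floor. The upper bound is established by exhibiting an explicit planar drawing realizing this many aligned pairs, and the lower bound by arguing no planar drawing can beat it.

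The main obstacle will be the lower bound: I must argue rigorously that the number of aligned pairs cannot exceed the target, taking into account the planarity constraint and the interaction between alignments at the covering vertices and at the independent vertices. The delicate point is ruling out configurations that try to achieve too many alignments simultaneously — for instance, showing that one cannot align many independent vertices on $\overline{uw}$ at once and that the alignments at $u$ and at $w$ cannot be combined with independent-vertex alignments beyond the claimed count without forcing crossings or edge overlaps. I would handle this by a careful parity-based accounting of how edges incident to $u$, $w$, and each $v_i$ can be matched into collinear pairs in any planar straight-line drawing, showing the total is at most $\lceil k/2 \rceil + \lfloor k/2 \rfloor$ in the relevant sense, which forces $\seg \ge \lfloor 3k/2 \rfloor$.
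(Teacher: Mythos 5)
Your framing---$\seg = 2k - (\text{number of aligned pairs})$ via \cref{LEMMA:max-aligned}, with at most one alignment at an independent vertex---matches the paper's approach, but your counting contains a fatal error: you claim that one can pair up the edges at $u$ into $\lfloor k/2\rfloor$ aligned pairs \emph{and likewise at $w$}, obtaining $2\lfloor k/2\rfloor$ pairs from the two covering vertices. This is geometrically impossible, and ruling it out is precisely the crux of the paper's proof (stated there as: if a pair of edges is aligned at one covering vertex, then no pair can be aligned at the other). Concretely, if $uv_a,uv_b$ are aligned at $u$, the $4$-cycle $\langle u,v_a,w,v_b\rangle$ is drawn as a triangle $v_awv_b$ with $u$ interior to the side $v_av_b$; any further independent vertex reachable from $u$ by a straight edge must then lie either inside this triangle or strictly below the line $v_av_b$, whereas two vertices $v_c,v_d$ aligned at $w$ would have to lie on opposite rays through the apex $w$---and in every case (both outside the triangle, or one inside and one in the vertical angle at $w$) one of the edges $uv_c,uv_d$ is forced to cross an edge $wv_a$ or $wv_b$, or the segment $v_cv_d$ would have to contain $w$ while both endpoints lie strictly below the line through $v_a,u,v_b$. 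So the true maximum number of aligned pairs is $\lceil k/2\rceil$ (at most one at an independent vertex, plus pairs at a \emph{single} covering vertex, where an edge already used at the independent vertex cannot be reused without creating an overlap), not $k$ or more.

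Your arithmetic signals the same problem: you write that for even $k$ the construction gives ``$2k-k=k=\lfloor 3k/2\rfloor$ segments,'' but $k\neq 3k/2$ for $k\ge 2$; and your lower-bound target of ``at most $\lceil k/2\rceil+\lfloor k/2\rfloor$'' aligned pairs equals $k$, which would only yield $\seg\ge k$, strictly weaker than $\lfloor 3k/2\rfloor$. Indeed, if your proposed drawing with $\approx k$ aligned pairs existed, it would \emph{refute} the lemma you are trying to prove---a sign that the construction cannot be realized. The correct accounting is: at most $\lceil k/2\rceil$ aligned pairs in any planar straight-line drawing, achieved by aligning one path at an independent vertex and $\lfloor (k-1)/2\rfloor$ pairs at one covering vertex (as in \cref{SUBFIG:9-banana}), giving $\seg = 2k-\lceil k/2\rceil = \lfloor 3k/2\rfloor$ in both parities. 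What is missing from your proposal, and what any complete proof must supply, is the geometric mutual-exclusion argument between the two covering vertices.
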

\begin{proof}
	A $k$-banana has $2k$ edges. There is a drawing in which
        $\lceil k/2 \rceil$ pairs of edges are aligned; see
        \cref{SUBFIG:9-banana}. This is optimum: At most one
	path of length two can be aligned at the independent vertex.
        If a  pair of edges is aligned at one
	covering vertex then no pair 
	can be aligned at the other covering vertex.
\end{proof}

\begin{figure}[tb]
  \centering
  \begin{subfigure}[t]{.3\linewidth}
	\centering
	\includegraphics[page=2]{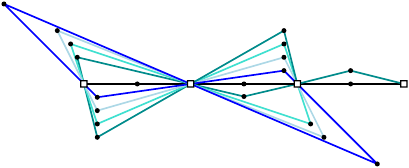}
	\caption{a 9-banana}
	\label{SUBFIG:9-banana}
  \end{subfigure}\hfill
  \begin{subfigure}[t]{.6\linewidth}
	\centering
	\includegraphics[page=1]{banana-path}
	\caption{a banana-path of length~3}
	\label{SUBFIG:banana-path}
  \end{subfigure}
  \caption{Banana graphs}
  \label{FIG:bananaPath}
\end{figure}

Given an integer $\ell>0$, a \emph{banana path of length $\ell$} is a
graph that is constructed from a path $\left<v_0,\dots,v_\ell\right>$
of length~$\ell$ by replacing, for $i \in \{1,\dots,\ell\}$,
edge~$\{v_{i-1},v_i\}$ of the path by a $k_i$-banana, called
banana~$i$, for some value~$k_i>0$.  We say that $k_i$ is the
\emph{multiplicity} of edge~$\{v_{i-1},v_i\}$; see
\cref{SUBFIG:banana-path}.  \emph{Banana trees} and \emph{banana
  cycles} are defined analogously; see \cref{FIG:bananaCycle}.

\begin{theorem}\label{THEO:bananaPath}
  The segment number of a banana path of length~$\ell$ can be computed
  in $O(\ell)$ time and can be expressed explicitly as a function of
  the multiplicities.
\end{theorem}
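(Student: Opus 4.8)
The plan is to
compute the segment number of a banana path by counting, for a fixed
optimal drawing, the maximum number of aligned edge pairs, since by
\cref{LEMMA:max-aligned} minimizing segments is the same as maximizing
alignments. A banana path of length~$\ell$ with multiplicities
$k_1,\dots,k_\ell$ has $\sum_{i=1}^\ell 2k_i$ edges in total, so
$\seg(G) = \sum_{i=1}^\ell 2k_i - (\text{max \# aligned pairs})$.
The alignments come in three flavors: (i) within a single
$k_i$-banana, a pair of its two edges can align at an independent
vertex; (ii) two edges from the same banana can align at one of its
two covering vertices; and (iii) two edges from \emph{different}
adjacent bananas, $i$ and $i+1$, can align through the shared path
vertex~$v_i$. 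The single-banana analysis in the preceding lemma
already tells us each banana alone contributes $\lceil k_i/2\rceil$
aligned pairs, and that an alignment at one covering vertex precludes
any alignment at the other; the new feature here is the
interaction~(iii) at the shared vertices~$v_1,\dots,v_{\ell-1}$.

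First I would formalize the local constraints at each internal vertex
$v_i$: at most one straight line can pass through $v_i$ using edges
of banana~$i$ and banana~$i+1$, and I would argue (using planarity and
the maximality in \cref{LEMMA:max-aligned}) that aligning edges
\emph{across} $v_i$ is in competition with aligning edges \emph{inside}
the adjacent bananas at their covering vertices. Concretely, I expect
to show that in an optimal drawing one can, without loss of generality,
normalize the structure so that at each covering/shared vertex the
choices are captured by a small finite set of local states
(e.g., ``an alignment is spent here'' versus ``not''), and that the
global gain is additive once these states are fixed. This reduces the
optimization to a chain, which suggests a left-to-right
dynamic program over $i = 1,\dots,\ell$ whose state records only how
the alignment situation at $v_{i-1}$ constrains banana~$i$; each
transition depends on $k_i$ and on whether a cross-alignment was used
at the previous shared vertex.

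Next I would set up the dynamic program explicitly. The state after
processing banana~$i$ records whether vertex~$v_i$ is still
``available'' to receive a cross-alignment with banana~$i+1$, and the
transition adds the within-banana contribution of banana~$i$
(essentially $\lceil k_i/2\rceil$, adjusted by at most a constant
depending on whether one of its covering vertices was consumed by a
cross-alignment). Since each $k_i$ can be handled in $O(1)$ time given
the parity information, and there are $\ell$ bananas with a
constant-size state space, the whole computation runs in $O(\ell)$
time. To get the \emph{explicit} formula I would solve the recurrence
in closed form: I anticipate that the answer separates into a base term
$\sum_i(2k_i - \lceil k_i/2\rceil) = \sum_i \lceil 3k_i/2\rceil$ minus a
correction counting the cross-alignments that can be simultaneously
realized, and that the maximum number of such cross-alignments is
governed by a simple combinatorial rule over the sequence of
multiplicities (for instance, a cross-alignment at $v_i$ being
available whenever the incident bananas have room, with conflicts only
between consecutive shared vertices competing for the same covering
vertex).

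The main obstacle I expect is step two: rigorously proving that the
local states suffice, i.e., that an optimal drawing never needs to
couple alignment decisions at $v_{i-1}$ and $v_{i+1}$ except through
the intermediate state at $v_i$, and that geometric realizability
never blocks any combination of local choices that the combinatorial
rule permits. In particular, I must verify that whenever the DP claims
a set of cross-alignments and within-banana alignments can coexist,
there is an actual planar straight-line drawing achieving them
simultaneously (no two forced-collinear edge sets conflict, no
crossings are introduced, and the paths can be routed so the shared
vertices sit correctly on the relevant lines). Establishing this
realizability—probably by an explicit drawing construction that places
the path vertices $v_0,\dots,v_\ell$ and threads the banana arcs around
them—is the part that turns the combinatorial upper bound into an exact
value, and I expect it to require the most care.
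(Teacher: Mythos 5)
Your framework (reduce to maximizing aligned pairs via \cref{LEMMA:max-aligned}, classify alignments into the three types, process the chain left to right in $O(\ell)$ time) matches the paper's setup, but the proposal breaks at its central local constraint: the claim that ``at most one straight line can pass through $v_i$ using edges of banana~$i$ and banana~$i+1$'' is false. Any number of lines may pass through a single point, and in fact up to $\min\{k_i,k_{i+1}\}$ disjoint pairs of edges, one from banana~$i$ and one from banana~$i+1$, can be aligned at $v_i$ \emph{simultaneously}, forming a pencil of lines through $v_i$; this is exactly what the paper's optimal drawing does (see \cref{SUBFIG:banana-path}), and it is already implicit in the single-banana lemma you quote, where $\lfloor (k-1)/2\rfloor$ pairs are aligned at a single covering vertex in \cref{SUBFIG:9-banana}, i.e., many lines through one point. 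Because of this error, your binary DP state (``a cross-alignment was spent at $v_{i-1}$ or not'') together with transitions that adjust $\lceil k_i/2\rceil$ ``by at most a constant'' undercounts alignments by an amount that grows with the multiplicities. Concretely, for $\ell=2$ and $k_1=k_2=2$ the optimum aligns four pairs (one at an independent vertex of each banana and \emph{two} cross-alignments at $v_1$), giving segment number $8-4=4$; your rule permits at most one cross-alignment at $v_1$ and hence at most three aligned pairs, i.e., at least five segments.

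Once this is corrected, no dynamic program is needed: since arbitrarily many cross-alignments can coexist at each inner covering vertex, the optimum becomes greedy and separable. The paper cross-aligns $a_i=\min\{k_i,k_{i+1}\}$ pairs at every inner vertex $v_i$ (with $a_0=a_\ell=1$ accounting for the edge already aligned at an independent vertex), and then pairs up the surplus edges of banana~$i$ within the banana at whichever covering vertex has the larger surplus, gaining $\lfloor s_i/2\rfloor$ pairs with $s_i=\max\{k_i-a_{i-1},\,k_i-a_i\}$; the single-banana constraint that within-banana covering-vertex alignments can occur at only one of the two ends is the only conflict that survives, and geometric realizability is exhibited by the explicit drawing of \cref{SUBFIG:banana-path}. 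This yields the closed formula directly. The quantity your state would actually have to carry is the number of edges of banana~$i$ consumed by cross-alignments at $v_{i-1}$, which is not of constant size. (A minor slip aside: $\sum_i(2k_i-\lceil k_i/2\rceil)=\sum_i\lfloor 3k_i/2\rfloor$, not $\sum_i\lceil 3k_i/2\rceil$, consistent with the $\lfloor 3k/2\rfloor$ segment number of a single $k$-banana.)
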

\begin{proof}
  Let $B$ be a banana path of length~$\ell$ with multiplicities
  $k_1,\dots,k_\ell$.  In each banana, align two edges at an
  independent vertex.  At each inner covering
  vertex $v_i$ with $i \in \{1,\dots,\ell-1\}$, align
  $a_i=\min\{k_i,k_{i+1}\}$ edges from bananas~$i$ and $(i+1)$.
  In particular, we align the edges from incident bananas that were
  aligned at their respective independent vertex. 
  Further, let $a_0 = a_\ell = 1$. For each $i \in \{1,\dots,\ell\}$,
  let $s_i=\max\{k_i-a_{i-1}, k_i-a_i\}$.
   Align $\lfloor s_i/2 \rfloor$ pairs of edges of banana~$i$ at covering vertex $v_{i-1}$ or $v_i$, based on where the maximum is assumed. See \cref{SUBFIG:banana-path}.
  Note that setting $a_0 = a_{\ell}=1$ takes into account that one
  edge is already aligned at an independent vertex.  Therefore, it
  cannot be aligned at a covering vertex at the same time.
	It is not possible to align more pairs of edges. 
	Thus, the segment number is 
  $\sum_{i=1}^\ell (2k_i - 1 - a_i - \lfloor s_i/2 \rfloor)$.
\end{proof}

\begin{figure}[tb]
	\begin{subfigure}[b]{.5\linewidth}
		\centering
		\includegraphics[page=6]{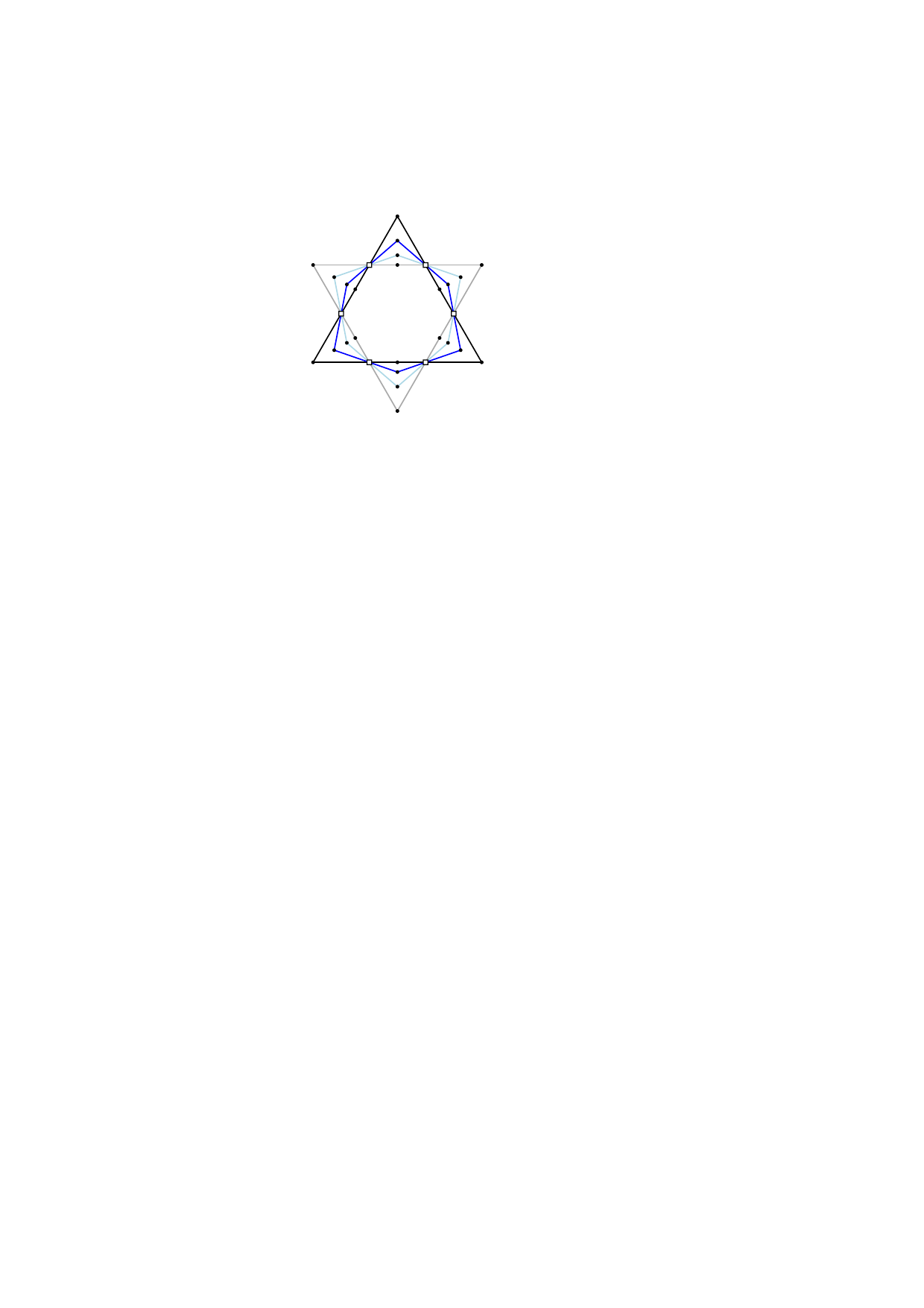}
		\caption{odd length cycle}
		\label{FIG:oddcycle}
	\end{subfigure}
	\hspace{-1ex}
	\begin{subfigure}[b]{.5\linewidth}
		\centering
		\includegraphics[page=2]{banana-cycle}
		\caption{even length cycle}
		\label{FIG:evencycle}
	\end{subfigure}
	\caption{Segment minimum drawings of banana cycles.}
	\label{FIG:bananaCycle}
\end{figure}

Observe that the strategy of locally optimal alignments would no
longer work for simple cycles.  E.g., assume each edge in a simple
cycle was replaced by a 1-banana.  Then locally, all pairs of incident
edges should be aligned, both at vertices of the simple cycle and at
the independent vertices.  But this is not realizable geometrically.
Instead the segment number would be~3.  Some examples of
segment-minimum drawings of cycles of 4-bananas are shown in
\cref{FIG:bananaCycle}.

\begin{theorem}\label{THEO:bananaTreesCycles} The segment number of
	\begin{enumerate}
		\item 
	 a banana tree and
\item
		a banana cycle of length at least five where each banana contains at least two independent vertices
	\end{enumerate}
can be computed in time linear in the number of covering vertices.
\end{theorem}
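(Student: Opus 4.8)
The plan is to work with aligned pairs rather than segments: by \cref{LEMMA:max-aligned} the segment number equals the number of edges minus the maximum number of aligned pairs, so it suffices to maximize aligned pairs and then subtract. I would first record a local description of where such pairs can sit. Exactly as in the single-banana analysis, an aligned pair is either at an independent vertex --- straightening one path of a banana, and at most one path per banana can be straightened --- or at a covering vertex, where it pairs two incident edges. In any planar embedding the edges around a covering vertex are grouped banana-by-banana (each banana contributes a consecutive fan), so the only coupling between two covering vertices is through a banana they share: edges of a banana consumed by alignments at one endpoint are no longer available at the other, and within-banana pairing may be used at only one of the two endpoints. This is precisely the $a_i$/$s_i$ bookkeeping of \cref{THEO:bananaPath}, now attached to the edges of a tree or cycle of bananas.

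For the \textbf{banana tree} the decisive point is realizability. Because the covering vertices form a tree, I would first draw them as any planar straight-line drawing of the underlying tree and then, rooting arbitrarily, place each banana's independent vertices working outward: an edge chosen to be aligned at a covering vertex fixes the ray on which its independent vertex must lie, a straightened path forces its independent vertex onto the segment between its two covering vertices, and since there are no cycles these ray constraints can be met simultaneously without forcing a crossing. Hence \emph{any} locally consistent choice of alignments is realizable, and maximizing the total number of aligned pairs decouples into a per-banana decision of how many of its edges to consume at each endpoint, subject to the single-endpoint rule for within-banana pairing. These decisions combine additively, so a bottom-up dynamic program over the covering vertices processes each vertex in time proportional to its degree and computes the optimum in time linear in the number of covering vertices, recovering the formula of \cref{THEO:bananaPath} as the special case of a path.

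For the \textbf{banana cycle} the same local analysis applies, but the covering vertices now form a cycle and the drawing must close up; this is exactly the obstruction exhibited by the cycle of $1$-bananas, whose locally optimal pattern (every consecutive triple collinear) would force a single straight line and is therefore unrealizable. I would show that the hypotheses --- length at least five and at least two independent vertices per banana --- supply enough angular slack to realize the locally optimal alignment up to an additive constant: each banana has a spare path that can be bent, rather than straightened or aligned, to absorb the $2\pi$ of accumulated turning as one traverses the cycle, and with at least five bananas the cyclic order admits a planar layout. The closing condition then costs only a bounded number of aligned pairs, and which configuration is optimal can be settled by trying the $O(1)$ boundary cases (playing the role of fixing $a_0$ and $a_\ell$ in the path); running the tree dynamic program along the opened cycle for each case preserves linear running time.

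The main obstacle I anticipate is the realizability argument, and within it the cyclic closing condition for banana cycles. The tree case is comparatively easy because acyclicity removes all interaction between distant covering vertices, so the local optima simply add up. The cycle case is where the hypotheses on length and multiplicity do the real work: one must prove that a spare independent vertex per banana, together with at least five bananas, suffices to absorb the turning constraint while keeping the drawing planar, and that the resulting loss relative to the local optimum is only a bounded correction. Verifying this trade-off is the delicate step.
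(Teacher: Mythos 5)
Your high-level strategy (maximize aligned pairs via \cref{LEMMA:max-aligned}, reuse the $a_i$/$s_i$ bookkeeping of \cref{THEO:bananaPath}, then argue realizability) is the same as the paper's, and the linear-time accounting is fine; but both realizability arguments~-- which you yourself identify as the crux~-- are asserted rather than proved, and in the tree case the assertion is false as stated. Your list of local constraints is incomplete: besides the ``within-banana pairing at only one endpoint'' rule, an edge of a path that is straightened at its independent vertex may be cross-aligned at a covering vertex but can never belong to a \emph{within-banana} pair there, since this forces four collinear points and overlapping edges. (The paper encodes exactly this by reserving $a_0=a_\ell=1$ and by straightening only paths whose edges are cross-aligned.) With your stated rules, ``locally consistent'' choices exist that are geometrically unrealizable, so ``any locally consistent choice of alignments is realizable'' cannot be the lemma underlying the DP. Moreover, even for correctly constrained choices, the claim ``since there are no cycles these ray constraints can be met'' skips the real difficulty: when a banana at $v$ is \emph{large}, its within-banana pairs wrap around $v$, and the whole subtree beyond the banana's other endpoint must be fitted into the wedge these pairs leave free~-- a problem that can recur at consecutive vertices (the red banana in \cref{SUBFIG:treenoB}). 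The paper resolves this with an inductive drawing that maintains explicit invariants (the edges awaiting cross-banana alignment at a vertex span an angle less than $\pi$, within-banana pairs enclose them, and each pending subtree is confined to a private disk); your proposal has no substitute for this argument.

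For the cycle, your text is a plan, not a proof: ``I would show that the hypotheses supply enough angular slack'' and ``verifying this trade-off is the delicate step'' defer precisely the content of the theorem. The paper's actual construction is also quite different from your turning-angle scheme of bending one spare path per banana to absorb $2\pi$: it extracts a \emph{maximal regular} banana sub-cycle $H$ (same number of paths in every banana), draws $H$ by one of two explicit patterns depending on the parity of the cycle length (\cref{FIG:bananaCycle}), in which only one (odd case) or two (even case) of the sub-cycles carry pairs aligned at independent vertices, and then treats $G$ minus $H$ as a union of banana paths aligned by the path rules but with independent-vertex alignments forbidden. So the sacrifice forced by closing the cycle is not an additive $O(1)$ number of pairs settled by ``trying boundary cases''; it is a structured, per-banana restriction, and the hypotheses (length at least five, at least two independent vertices per banana) are what make these explicit drawings exist. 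Without a worked-out construction of this kind and a matching optimality argument, the cycle half of the proposal restates the problem rather than solving it.
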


\begin{proof}
	\textbf{Banana tree:}
	Let $G$ be a banana tree constructed from a tree~$T$. For an edge~$e=\{u,v\}$ of~$T$, let $k_{uv} = k_{vu}$ be the  multiplicity of $e$ and let $E_{uv}$ be the set of edges incident to $v$ that are contained in the $k_{uv}$-banana inserted for $e$.
	For each banana, align the two edges incident to one independent
	vertex.  For each leaf~$w$ of~$T$, align the remaining edges of the
	respective banana~--~up to at most two~--~at~$w$.  Let $v$ be an inner vertex of $T$, and
	let $v_1,\dots,v_\ell$ be the neighbors of~$v$ in~$T$.  Align as
	many pairs of edges from different bananas at~$v$.  These are all (except possibly one)
	edges if 
	there is no $j \in\{1,\dots,\ell\} $ such that
	$k_{vv_j} > \sum_{i \neq j} k_{vv_i}$.
	If there is a $k_{vv_j}$-banana that is \emph{large} at $v$, i.e., if $k_{vv_j} > \sum_{i \neq j} k_{vv_i}$ for some~$j$ then align 
	some edges inside the respective $k_{vv_j}$-banana at~$v$ -- or
	at~$v_j$ if the $k_{vv_j}$-banana is also large at $v_j$ and the alignment at this
	end would be more beneficial. More precisely: let $u_1,\dots,u_k$ be the neighbors of $v_j$ in $T$ different from $v$.
	Then align $\lfloor(k_{vv_j} - \sum_{i \neq j} k_{vv_i})/2\rfloor$ pairs of edges within $E_{v_jv}$ at $v$ or $\lfloor(k_{vv_j} - \sum_{i=1}^k k_{v_ju_i})/2\rfloor$ pairs of edges within $E_{vv_j}$ at $v_j$, depending on which number is larger.
	Observe that at any vertex this
	condition always holds for at most one banana, so there are no
	conflicts. Locally, this yields an optimal assignment. 
	
	\begin{figure}
		\centering
		\subcaptionbox{large banana at $v$ towards the root\label{SUBFIG:treeB}}{\includegraphics[page=1]{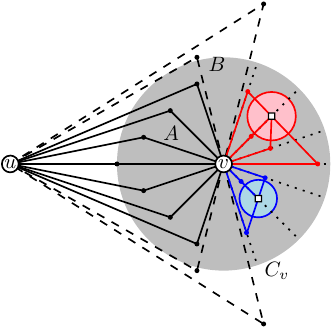}} \hfil
		\subcaptionbox{large banana at $v$ towards a subtree\label{SUBFIG:treenoB}}{\includegraphics[page=2]{banana-tree}}
		\caption{How to align the edges at an inner vertex $v$ of a banana tree.}
		\label{FIG:banana-tree}
	\end{figure}

	The respective alignment can be realized geometrically: We root $T$ at a degree one vetex $r$. All but at most two edges incident to $r$ must be aligned at $r$ and two edges incident to one independent vertex incident to $r$ must also be aligned. This can be realized as in \cref{SUBFIG:9-banana}. In general, we have now the following situation, we have drawn the bananas contained in a subtree $T'$ of $T$.  All vertices of $T'$ that are incident to vertices not yet in $T'$ are leaves of $T'$. Let $v$ be such a leaf of $T'$ and let $u$ be the ancestor of $v$ in $T'$. Let $A \subseteq E_{uv}$ be the set of edges that should be aligned with edges in other bananas and let $B\subseteq E_{uv}$ be the edges that are aligned with  each other. We maintain the property that the maximum angle between two edges of $A$ is less than $\pi$. Thus, if $B\neq \emptyset$ then $B$ encloses $A$. See \cref{SUBFIG:treeB}. Moreover, for each leaf $v$ of $T'$, we maintain disjoint circles $C_v$ around $v$ that contain or intersect at most edges from $E_{uv} \cup E_{vu}$. We now insert the subtree rooted at $v$ into these circles $C_v$ maintaining the alignment requirements, the disjoint circles and the above mentioned drawing requirements. Observe that if $B=\emptyset$ then there might be a different large banana at $v$ (the red banana in \cref{SUBFIG:treenoB}). Since the maximum angle at $v$ between two edges in $A$ is less than $\pi$ it follows that we can always fit this large banana.

	\textbf{Banana cycle:}
	Let $G$ be a banana cycle of length at least five with the property that each banana contains at least two independent vertices.
	Let $H$ be a maximal subgraph of $G$ such that $H$ is a regular
	banana cycle, i.e., such that each banana contains the same number
	of independent vertices.  In \cref{FIG:evencycle} this is the
	subgraph induced by the black, gray, blue, and lightblue edges. Depending on whether the length of the cycle is odd or even, draw
	$H$ as indicated in \cref{FIG:bananaCycle} for banana cycles of
	length~5 or~6 starting with the one or two cycles, respectively,
	containing pairs of edges aligned at independent vertices (black and
	gray).  Removing the edges of $H$ from $G$ yields a union of banana
	paths.  We align edges according to the rules for banana paths without
	aligning edges at independent vertices (edges in
	different shades of green).
\end{proof}

\section{Parameter: Vertex Cover Number}
\label{sec:vertex-cover-number}
In this section, we consider the following parametrized
problem.
\parametrizedProblem{\sc Segment Number by Vertex Cover Number}%
{A simple planar graph $G$, an integer~$s$.}%
{Vertex cover number $k$ of $G$.}%
{Is the segment number of $G$ at most~$s$?}
Let $G$ be a simple planar graph, and let $V' \subset V(G)$ be a
vertex cover of~$G$ with $k$ vertices.  In order to compute the
segment number of $G$, we first compute a subgraph of $G$ whose size
is in $O(2^k)$. We vary over all possible alignments of this
subgraph and check whether they are geometrically realizable.
We finally use an integer linear program (ILP) in order to reinsert
the missing parts from $G$.  In the end, we take the best among the
thus computed solutions.  The details are as follows.

\subparagraph{Dividing $V \setminus V'$ into equivalence classes}

Two vertices of $v \in V(G) \setminus V'$ are \emph{equivalent} if and
only if they are adjacent to the same set of vertices in~$V'$. 
We say that an equivalence class $C$ is a \emph{$j$-class} if
each vertex $v \in C$ is adjacent to exactly $j$ vertices in
$V'$.  Observe that the $j$-classes contain at most two
vertices if $j > 2$. Otherwise $G$ would contain a
$K_{3,3}$, contradicting that $G$ is planar. Thus, the
number of vertices in $j$-classes, $j>2$ is bounded by
$2 \cdot \sum_{j=3}^k{k \choose j} \in O(2^k)$.

\subparagraph{Reducing the size of the graph}

Let $G_1$ be the graph obtained from $G$ by removing all
vertices contained in 1-classes.  Consider a planar embedding of~$G_1$.
Let $C$ be a 2-class and let $v,w \in V'$ be adjacent to all vertices in~$C$.  
Observe that the edges between $v$ and $C$ do not have to be
consecutive in the cyclic order around~$v$ (and similarly
for~$w$).  A \emph{contiguous 2-class} is a maximal subset
of~$C$ such that the incident edges are consecutive around
both~$v$ and~$w$.  Note that two contiguous 2-classes are 
separated by the edge $\{v,w\}$ or by at least a
vertex-cover vertex.  Thus, a
2-class consists of at most $k$ contiguous 2-classes.  Now,
for each 2-class~$C$, we remove all but $\min(|C|, k)$
vertices from~$C$.  Let the resulting graph be~$G_2$. 
Observe that the vertices of $G_2$ are the vertex cover vertices, 
the vertices of all $j$-classes, $j>2$, and at most $k$ vertices from each 2-class.
Thus, the number of vertices of $G_2$ is bounded by 
$k + 2 \cdot \sum_{j=3}^k{k \choose j} + k \cdot {k \choose 2} \in O(2^k)$.

\subparagraph{Alignment within 2-classes}

For each 2-class $C$, let $e_C$ be the edge connecting 
the two vertex cover vertices adjacent to the vertices in $C$. 
We vary over all subsets $\mathcal C$ of 2-classes $C$ 
for which $e_C$ is not an edge of $G_2$.
There are at most $2^{3k}$ such subsets:
Observe that the planar graph induced by $V'$ and one vertex from each
2-class is a subdivision of a graph with vertex set $V'$ and edge set
$\{e_C \colon C \text{ 2-class}\}$, which implies that the number of
2-classes is at most $3k-6$.
For each $C \in \mathcal C$, we add $e_C$ to $G_2$. 
These edges represent  a pair of aligned edges at an independent vertex. 
Let the resulting graph be $G_2(\mathcal C)$. 

\subparagraph{Varying over all planar embeddings $\mathcal E$
  of~$G_2(\mathcal C)$}
\label{PAGE:embeddings}

For each 2-class $C$, we replace each contiguous 2-class
in~$\mathcal{E}$ by four vertices.
The resulting four 2-paths represent the boundaries of two consecutive 
contiguous 2-classes; each of which shall be drawn in a non-convex way, i.e., 
such that the segment between the two incident vertex cover vertices
does not lie in its interior.
We call these non-convex quadrilaterals \emph{boomerangs}.
Let the thus constructed plane graph be $G_2(\mathcal C,\mathcal E)$.        
          
The reason why we represent each contiguous 2-class by two non-convex quadrilaterals instead of one arbitrary quadrilateral will become clear
when we reinsert the 2-classes into the boomerangs; see also \cref{SUBFIG:insertingBoomerangs}: 
Let $u \in C$, let $v$ and $w$ be the two vertex cover vertices
incident to $u$ and assume that we have opted to draw $u$ into boomerang~$b$.
Then we need to ensure that the edges $e_v = \{v,u\}$ and $e_w=
\{w,u\}$ meet inside $b$ when we draw~$e_v$ starting at $v$ and~$e_w$
starting at $w$ independently of each other and with arbitrary slopes.

\begin{figure}
	\centering
	\subcaptionbox{splitting boomerangs\label{FIG:banana-split}}[0.3\linewidth]{\centering \includegraphics[page=2]{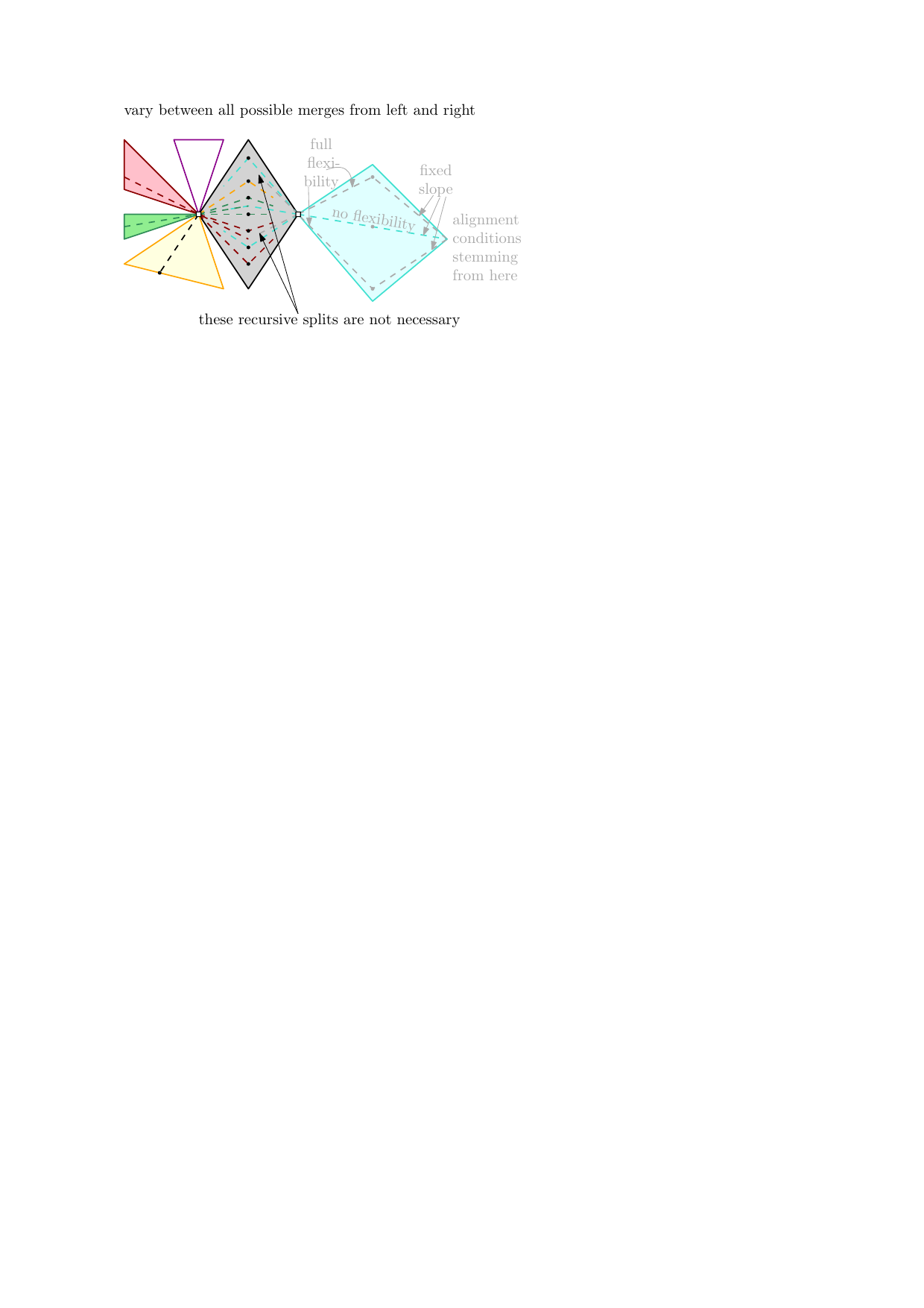}} 
	\hfil
	\subcaptionbox{reinserting 2-classes into a boomerang\label{SUBFIG:insertingBoomerangs}}[0.5\linewidth]{\centering \includegraphics[page=3]{splitting-bananas}}
	\caption{\label{FIG:splitAndInsert} We vary over all pairings
		$\pi$ of edges and boomerangs around each
		vertex.  (a)~The dashed edges are inserted into
		$G_2(\mathcal C,\mathcal E)$ in order to represent alignment
		requirements with boomerangs.  If the alignment requirements of
		the thus constructed graph $G_2(\mathcal C,\mathcal E,\pi)$ can
		be geometrically realized, we use an ILP to distribute the
		remaining edges.  (b)~How to geometrically insert the 2-classes
		into a boomerang.}
\end{figure}

\subparagraph{Alignment Requirements}

We vary over all possible ``pairings" $\pi$ between edges and/or
boomerangs incident to a common vertex $v$ that respect the
embedding~$\mathcal E$.  If we require that two edges are aligned,
that an edge should be aligned with an edge inside a boomerang, or
that two edges inside different boomerangs can be aligned, we say that
the respective edges and boomerangs are \emph{paired}.  E.g.,
\cref{FIG:banana-split} illustrates the following pairing around a
vertex $v$: the yellow boomerang is paired with the blue and the red
boomerang as well as with the edge $\{v,v_9\}$.  The red and the green
boomerangs are also paired.  Edges $\{v,v_{10}\}$
and $\{v,v_{3}\}$ are paired.  Edge $\{v_6,v\}$ is not paired with any other edge.

In order to handle pairings within boomerangs, we
insert further edges into $G_2(\mathcal C,\mathcal E)$.
See the dashed edges in \cref{FIG:banana-split}.
Let $v$ be a vertex cover
vertex. Let $v_1,\dots,v_k$ be the neighbors of $v$ in this
order around $v$. Assume that $v_1$ and $v_2$ are two
independent vertices representing the border of a
boomerang~$b$. 
Assume that we would like to align the
edges around $v$ such that $\{v,v_i\},\dots,\{v,v_j\}$
should get aligned with some edges in $b$ if
any. Then we add up to $j-i+1$ edges incident to $v$ inside the region representing $b$ and require them to be
aligned with the respective edges among  $\{v,v_i\},\dots,\{v,v_j\}$ at $v$.      

E.g., when we consider the yellow boomerang $b$ in \cref{FIG:banana-split} then $i=5$, $j=9$. Since we require the red boomerang to be partially paired with $b$ (we also want it to be paired with the green one), we only add a counterpart for one of the boundary edges, $\{v,v_5\}$, inside $b$. The blue boomerang should be exclusively paired with $b$, so we add counter parts for both boundary edges, $\{v,v_7\}$ and $\{v,v_8\}$. Edge $\{v,v_9\}$ should be aligned with an edge in $b$, so it also gets a counterpart in $b$. According to the currently chosen pairing $\pi$, the edge $\{v,v_6\}$ does not have to be aligned in $v$, so we do not add a counterpart for it in~$b$. 

We do this operation for each boomerang and
each incident vertex cover vertex.  The resulting graph $G_2(\mathcal C,\mathcal E,\pi)$ has at most thrice as many edges~as~$G_2(\mathcal C,\mathcal E)$.

\begin{figure}[tb]
	\centering
	\includegraphics{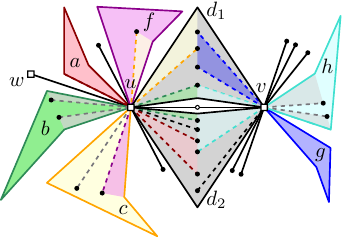}
	\caption{Example situation for $G_2(\mathcal C,\mathcal E,\pi)$ as a basis for the description of the~ILP.}
	\label{FIG:ilp-example}
\end{figure}

\subparagraph{Testing and optimally reinserting the 1- and 2-class vertices}

We now have a plane graph $G_2(\mathcal C,\mathcal E,\pi)$ with
$O(2^k)$ vertices. See \cref{FIG:ilp-example}. Some pairs of
edges are required to be aligned. The two 2-paths representing the
boundary of a boomerang must bound a quadrilaterial that does not
contain the segment between the two vertex cover vertices.  We use
Renegar's algorithm \cite{Renegar92a,Renegar92b} to test
whether these requirements are geometrically realizable. 

\begin{lemma}
  \label{lem:realize-G-two}
  Given a plane graph $H$ with embedding~$\mathcal E$ and $K$
  vertices, a set~$A$ of pairs of edges, a set~$B$ of 4-cycles
  of~$H$, and for each 4-cycle~$Q$ in~$B$ a specified diagonal
  $d(Q)$, we can decide, in $K^{O(K)}$ time, whether there
  exists a planar straight-line drawing of~$H$ such that (i)~the edge
  pairs in~$A$ are aligned and (ii)~for each 4-cycle~$Q$ in~$B$,
  $d(Q)$ does not lie in the interior of~$Q$ or on~$Q$.
\end{lemma}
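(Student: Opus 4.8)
The plan is to follow the strategy of \cref{lem:firstOrder}: I would express the existence of the desired drawing as an existential first-order sentence~$\Phi$ about the reals and then decide its truth with Renegar's procedure (\cref{thm:renegar}). For each vertex $u \in V(H)$ I introduce two real variables $x_u,y_u$ for its coordinates, so that $\Phi$ has $N=2K=O(K)$ variables. The sentence is the existential closure of a conjunction of polynomial constraints of three kinds: constraints that force the straight-line drawing to realize the prescribed embedding~$\mathcal E$; constraints that encode the alignment set~$A$; and constraints that encode the non-convexity requirements coming from~$B$. The crucial point is that each of these conditions is a Boolean combination of sign conditions on the orientation determinant $\Delta(a,b,c)=(x_b-x_a)(y_c-y_a)-(x_c-x_a)(y_b-y_a)$ of three points $a,b,c$, a polynomial of total degree two whose coefficients are integers of constant absolute value.

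To realize~$\mathcal E$ I would, following Chaplick et al., require: (1)~all vertices pairwise distinct, encoded by $(x_u-x_v)^2+(y_u-y_v)^2>0$; (2)~that around each vertex the angular order of its incident edges agrees with the rotation prescribed by~$\mathcal E$, a conjunction of conditions $\Delta(\cdot,\cdot,\cdot)>0$ on consecutive triples of neighbors; (3)~that no two independent edges intersect and no vertex lies in the relative interior of a non-incident edge, which is the negation of the standard ``two segments meet'' predicate and hence a Boolean combination of products of orientation determinants; and (4)~that the designated outer face is the one prescribed by~$\mathcal E$. This amounts to $O(K^2)$ polynomials of constant degree. For the alignment set, for each pair $(\{u,v\},\{v,w\})\in A$ I would require $u,v,w$ to be collinear and $v$ to lie strictly between $u$ and $w$, i.e.\ $\Delta(u,v,w)=0$ together with the sign condition $(x_u-x_v)(x_w-x_v)+(y_u-y_v)(y_w-y_v)<0$; since aligned edges are incident, these collinear triples do not interfere with the non-crossing constraints of~(3), which concern only independent edges. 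This contributes $O(|A|)$ further constraints.

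For the non-convexity requirement, consider a $4$-cycle $Q=\langle p_1,p_2,p_3,p_4\rangle$ in~$B$ whose specified diagonal is, say, $d(Q)=\{p_1,p_3\}$. Because the drawing realizes~$\mathcal E$ and is crossing-free, $Q$ bounds a simple quadrilateral, and for a simple quadrilateral the diagonal $p_1p_3$ meets the closed region bounded by~$Q$, beyond the two shared endpoints, if and only if $p_2$ and $p_4$ lie on opposite sides of the line through $p_1$ and $p_3$. Hence condition~(ii) is captured by demanding that $p_2$ and $p_4$ lie strictly on the same side, i.e.\ $\Delta(p_1,p_3,p_2)\cdot\Delta(p_1,p_3,p_4)>0$; this is a single polynomial of degree four and also rules out the degenerate convex and collinear cases. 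Over all of~$B$ this adds $O(|B|)$ constraints.

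It remains to bound the parameters of~$\Phi$ and apply \cref{thm:renegar}. The sentence has $N=O(K)$ variables and $P=K^{O(1)}$ polynomials, each of maximum total degree $D=O(1)$ and with integer coefficients of constant absolute value, so that the coefficient length satisfies $L=O(1)$. Renegar's procedure therefore decides $\Phi$ in time $(L\log L\log\log L)\cdot(PD)^{O(N)}=(K^{O(1)})^{O(K)}=K^{O(K)}$, and $\Phi$ is satisfiable exactly when a drawing with properties~(i) and~(ii) exists. The step that demands the most care is the faithful encoding of ``realizes~$\mathcal E$'', namely the rotation system, the non-crossing conditions, and the choice of outer face, so that a satisfying assignment is genuinely the drawing prescribed by the embedding; by contrast, the alignment and non-convexity predicates reduce transparently to collinearity, betweenness, and same-side tests.
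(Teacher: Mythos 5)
Your proposal is correct and takes essentially the same approach as the paper's proof: both encode the drawing as an existential first-order sentence over the reals using orientation determinants (the paper's $\chi$, your $\Delta$), including the identical same-side product constraint $\chi(a,b,c)\cdot\chi(a,d,c)>0$ for the specified diagonals, and both apply Renegar's procedure (\cref{thm:renegar}) with $N=2K$ variables, $K^{O(1)}$ constant-degree polynomials, and constant-size coefficients to get the $K^{O(K)}$ bound. The only cosmetic difference is that you enforce betweenness of aligned edges explicitly via a dot-product inequality, whereas the paper observes that betweenness is already implied by the rotation-system constraints (handling the degree-one case by contracting the path).
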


\begin{proof}
	We use the following simple tool from geometry.  Given three points
	$a=(x_1,y_1)$, $b=(x_2,y_2)$, $c=(x_3,y_3)$ in the plane, let
	\[
	\chi(a,b,c)=\left|
	\begin{array}{ccc}
	x_1&y_1&1\\
	x_2&y_2&1\\
	x_3&y_3&1
	\end{array}
	\right|
	\]
	be the scalar triple product of three 3-dimensional vectors
	$(x_1,y_1,1)$, $(x_2,y_2,1)$, and $(x_3,y_3,1)$.  As is well
        known,
	the following conditions are equivalent:
	\begin{itemize}
		\item $\chi(a,b,c)>0$;
		\item $a\ne b$, and the point $c$ lies in the left halfplane with
		respect to the oriented line through~$a$ and~$b$, oriented
		from~$a$ to~$b$;
		\item the points $a$, $b$, and $c$ are pairwise distinct,
		non-collinear, and occur counterclockwise in the circumcircle of
		the triangle~$\triangle abc$ in this order.
	\end{itemize}
	Moreover, $\chi(a,b,c)=0$ if and only if the points $a$, $b$, and
	$c$ are collinear, including the case that some of them coincide.
	
	Let $\{v_1,\dots,v_K\}$ be the vertex set of~$H$.  We want to
	express the fact that there are $K$ pairwise distinct points
	$(x_1,y_1),\dots,(x_K,y_K)$ that determine a straight-line drawing
	of~$H$.  Our existential statement~$\Phi$ about the reals begins,
	therefore, with the quantifier prefix	
	\[\exists x_1 \exists y_1 \dots \exists x_K \exists y_K
	\bigwedge_{1 \le i < j \le K} \neg (x_i=x_j \land y_i=y_j).\]
	For each pair of edges $e=\{u,v\}$ and $f=\{v,w\}$ such that~$f$
	follows~$e$ immediately in the counterclockwise cyclic ordering of
	the edges incident to~$v$ according to~$\mathcal{E}$, we add
	to~$\Phi$ the constraint $\chi(u,w,v)>0$ (using conjugation).
	
	For each 4-cycle $Q=\langle a,b,c,d \rangle \in B$ with $d(Q)=\langle a,c \rangle $, we add
	to~$\Phi$ the constraint $\chi(a,b,c)\cdot\chi(a,d,c) > 0$.
	In other words, $\chi(a,b,c)$ and $\chi(a,d,c)$ must have the
	same (non-zero) sign.
	
	If there is a pair of edges in $A$ that is not incident, we can reject the instance.
	For each pair of edges $e=\{u,v\}$ and $f=\{v,w\}$ in~$A$, we add
	to~$\Phi$ the constraint $\chi(u,v,w)=0$.
	(We could also express that $v$
	must lie in the relative interior of the line segment
	$\overline{uw}$, but this is implied automatically by the ordering
	of the edges around~$v$~-- if there is another edge incident to~$v$.
	If not, we can remove~$v$ and replace the path
	$\langle u,v,w \rangle$ by the straight-line edge $\{u,w\}$.)
	
	For each pair of non-incident edges $e=\{u,v\}$ and
	$e'=\{u',v'\}$ of~$H$, we need to ensure that~$e$ and~$e'$ do
	not cross each other.  This can be expressed using the
	relation $D(u,v,u',v')$ defined in the proof of
	\cite[Lemma~2.2]{cflrvw-cdgfl-JGAA23}.
	
	Note that~$\Phi$ involves $O(K^2)$ polynomials in
	$2K$ variables, each of constant total degree and with integer
	coefficients of constant absolute value.  Using Renegar's result
	(\cref{thm:renegar}), we can test whether~$\Phi$ admits a solution
	over the reals in $K^{O(K)}$ time.
\end{proof}

We apply \cref{lem:realize-G-two} to
$H=G_2(\mathcal C,\mathcal E,\pi)$, setting $A$ according to~$\pi$,
$B$ to the set of all boomerangs,
and, for each boomerang~$b$, the specified diagonal $d(b)$ to the
segment between the two vertex cover vertices.
If the answer is yes, we set up and solve an ILP to
optimally insert 2-class vertices into the boomerangs and to optimally
add the 1-class vertices.  To this end, we use, among others, the
following variables: $x_{v,b,d}=x_{v,d,b}$ expresses how many edges in
boomerang $b$ should be aligned to edges in boomerang $d$ at vertex
$v$ and $y_{v,d}$ expresses how many edges between $v$ and 1-class
vertices should be aligned with edges in boomerang~$d$.  We allow that
some of the boomerangs remain empty. 
The details of the ILP are as follows.

\subparagraph{Description of the ILP}

We explain the ILP for the optimal distribution of 2-class and 1-class
vertices by example.  Let $d_1$ and $d_2$ be the two gray boomerangs 
in \cref{FIG:ilp-example} with vertex cover vertices $u$ and $v$, let $a,b,c$ be
the boomerangs that are incident to~$u$ and paired with~$d$, and let~$g$
and~$h$ be the boomerangs incident to~$v$ and paired with~$d$.
(Additionally, boomerang~$f$ is paired with boomerang~$c$.)

With the respective upper case letters we will denote the respective
2-classes. E.g., $D$ is the 2-class whose vertices are adjacent to~$u$
and~$v$ since the boomerangs $d_1$ and $d_2$ represent contiguous
subsets of~$D$.
For a 2-class $C$, let~$\mu_{C}$ be the size of $C$ --  minus one if
$C \in \mathcal C$, i.e., if one 2-path through a vertex of $C$ was
already modeled by the edge~$e_C$. 
For the ILP, $\mu_C$ is a constant.  

Let now $b$ and $d$ be two boomerangs that are paired at the vertex cover
vertex~$v$.  We introduce an integer variable $x_{v,b,d}=x_{v,d,b}$
for the number of pairs $\{e,e'\}$ of edges such that $e$ is in~$b$,
$e'$ is in~$d$, and $e$ and $e'$ are aligned at~$v$.

For each vertex~$v$, let $\lambda_v$ be the number of leaves
(1-classes) incident to~$v$.  For the ILP, $\lambda_v$ is a constant.
Further, we introduce an integer variable $y_{v,d}$ for the number of
leaves incident to~$v$ that are aligned with edges assigned to a
boomerang~$d$, and a variable $y_v$ for the number of pairs of
leaves incident to~$v$ that are aligned with each other.  Then,
$\lambda_v$ is an upper bound to the sum of all leaves that are
aligned with edges of any boomerang incident to~$v$ plus twice the number
of pairs of leaves that are aligned with each other.  For vertices~$u$
and~$v$ in \cref{FIG:ilp-example}, this yields the constraints
\begin{align*}
y_{v,d_1} + y_{v,d_2} + y_{v,g} + y_{v,h} + 2y_v &\le \lambda_v, \\
y_{u,a} + y_{u,b} + y_{u,c} + y_{u,d_1} + y_{u,d_2} + y_{u,f} + 2y_u &\le \lambda_u.
\end{align*}
For each vertex~$v$, let~$x_v$ be defined as follows:
\[x_v = \sum_{b,d \text{ boomerangs incident to } v} x_{v,b,d} \quad +
\sum_{d \text{ boomerang incident to } v} y_{v,d}.\]
For example, in \cref{FIG:ilp-example}, we have 
\begin{align*}
x_u &= (x_{u,a,d_2} + x_{u,b,d_1} + x_{u,b,d_2} + x_{u,c,d_1} +
x_{u,c,f}) + (y_{u,a} + \dots + y_{u,f}) , \\
x_v &= (x_{v,d_1,g} + x_{v,d_1,h} + x_{v,d_2,h}) +
(y_{v,d_1} + y_{v,d_2} + y_{v,g} + y_{v,h}) .
\end{align*}
Then our objective is to
\[ \text{maximize } \sum_{v \in V(G)} (x_v + y_v). \]
For a boomerang~$b$ incident to vertex cover
vertex~$v$, let $\varepsilon_{v,b}$ be the number of edges of~$G_2$
that are paired at~$v$ with edges of~$b$.  For the ILP,
$\varepsilon_{v,b}$ is a constant.  For example, in
\cref{FIG:ilp-example}, $\varepsilon_{u,d_2}=1$ due to the edge~$wu$
in~$G_2$, whereas $\varepsilon_{u,d_1}=0$.
We need only one more type of constraints.
We establish, for each boomerang $d_i$ representing a contiguous
subclass of the 2-class~$D$, a bound~$z_{d_i}$ for the number
of vertices that have to be contained in~$d_i$ in order to satisfy the
alignment requirements.
\begin{align*}
x_{u,b,d_1} + x_{u,c,d_1} + y_{u,d_1} + \varepsilon_{u,d_1} &\le z_{d_1} \\
x_{v,g,d_1} + x_{v,h,d_1} + y_{v,d_1} + \varepsilon_{v,d_1} &\le z_{d_1} \\
x_{u,a,d_2} + x_{u,b,d_2} + y_{u,d_2} + \varepsilon_{u,d_2} &\le z_{d_2} \\
x_{v,h,d_1}+ y_{v,d_2}  + \varepsilon_{v,d_2} &\le z_{d_2}
\end{align*}
We also do this for all other 2-classes.
Now the following constraint says that the total number of vertices
that we distribute to the boomerangs associated with the 2-class~$D$ is
at most~$\mu_D$.  In our example, we have
\[z_{d_1} + z_{d_2} \le \mu_D \]
Note that the edges that belong to aligned pairs coming into a
boomerang from one side can always be completed to paths of length~2,
either with edges that belong to aligned pairs entering the
boomerang from the other side or with usual unconstrained edges.

It remains to analyze the numbers of variables and constraints.
Recall that the number of vertices and edges of the plane graph
$G_2(\mathcal C,\mathcal E,\pi)$ is in $O(2^k)$.  The number
of variables of type~$y_{v,d}$ and~$z_d$ is bounded by the number of
edges and, thus, is also in $O(2^k)$.  The number of
variables of type~$x_v$ and~$y_v$ is obviously in $O(2^k)$.
Finally, for a fixed vertex cover vertex $v$, the number of variables
of type~$x_{v,b,d}$ is bounded by $\deg v$: If $b$ is only paired
with~$d$, then we assign the cost for the variables
$x_{v,b,d}=x_{v,d,b}$ to~$b$.  If both, $b$ and $d$ are not only
paired with each other, we assign the cost of $x_{v,b,d}$ to~$b$ and
the cost of $x_{v,d,b}$ to~$d$.  Observe that this happens at most
twice for each boomerang incident to~$v$.  Thus, by the hand shaking
lemma, the total number of variables of type~$x_{v,b,d}$ is bounded by
the number of edges, which is again in $O(2^k)$.
There is one constraint for each 1-class, one constraint for each
variable of type~$x_v$, two constraints for each variable of
type~$z_d$, and one constraint for each 2-class. Thus, the total
number of constraints is in $O(2^k)$.

Summarizing the above discussion, the ILP uses $O(2^k)$
variables and constraints and, thus, can be solved in $2^{O(k2^k)}$ time
\cite{DBLP:journals/mor/Kannan87,DBLP:journals/mor/Lenstra83,DBLP:journals/combinatorica/FrankT87}.

We are now ready to prove the main theorem of this section.

\begin{theorem}\label{THEO:vertex-cover}
  \textsc{Segment Number by Vertex Cover Number} is FPT.
\end{theorem}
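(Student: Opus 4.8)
The plan is to assemble the ingredients developed above into a single algorithm and then argue correctness and running time. First I would compute a minimum vertex cover~$V'$ (which is itself FPT in~$k$), partition $V(G)\setminus V'$ into equivalence classes, and build the reduced graph~$G_2$ with $O(2^k)$ vertices as described. I would then enumerate, nested in this order, the subsets~$\mathcal C$ of 2-classes, the planar embeddings~$\mathcal E$ of $G_2(\mathcal C)$, and the pairings~$\pi$, constructing for each combination the plane graph $G_2(\mathcal C,\mathcal E,\pi)$ together with its set~$A$ of required aligned edge pairs and its set~$B$ of boomerang 4-cycles. For each combination I would invoke \cref{lem:realize-G-two} to test geometric realizability of the alignment and non-convexity constraints and, whenever it succeeds, solve the ILP to optimally reinsert the 1- and 2-class vertices. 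The segment number of~$G$ is then $|E(G)|$ minus the maximum, taken over all realizable combinations, of the aligned pairs fixed by~$\pi$ on the core graph plus the ILP optimum; the algorithm answers yes precisely when this value is at most~$s$.

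Correctness rests on two directions. For \emph{soundness}, every combination that passes \cref{lem:realize-G-two} yields an actual planar straight-line drawing of the core graph, and the boomerang construction guarantees that the 2-class vertices counted by the ILP can be physically inserted with the prescribed alignments: the diagonal constraint forcing the segment between the two cover vertices to avoid each boomerang's interior is exactly what lets the two incident edges of a reinserted vertex meet inside the boomerang with arbitrary slopes (cf.\ \cref{SUBFIG:insertingBoomerangs}), and the two-sided completion remark ensures that aligned pairs entering a boomerang close up into length-2 paths. For \emph{completeness} I would start from an optimal drawing~$\Gamma$ of~$G$ and read off the combinatorial data it induces: the embedding restricted to~$G_2$ fixes an~$\mathcal E$, the independent-vertex alignments of 2-classes fix~$\mathcal C$, and the alignments at cover vertices fix a pairing~$\pi$. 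By \cref{LEMMA:max-aligned} the number of segments equals $|E(G)|$ minus the number of aligned pairs, so it suffices to show that the ILP value for this branch is at least the number of aligned pairs realized in~$\Gamma$ among reinserted vertices; this holds because the ILP variables faithfully encode those alignment counts subject to the capacities $\mu_C$, $\lambda_v$, and the boomerang demands~$z_{d_i}$.

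For the running time I would observe that the overall cost is a product of branching factors, each a function of~$k$ alone, times a polynomial in~$n$ for preprocessing. The number of choices of~$\mathcal C$ is at most $2^{3k}$, while the numbers of embeddings~$\mathcal E$ and of pairings~$\pi$ are bounded by functions of the number of vertices and edges of~$G_2$, which is $O(2^k)$, hence by functions of~$k$. Each realizability test costs $K^{O(K)}=2^{O(k2^k)}$ with $K\in O(2^k)$ by \cref{lem:realize-G-two}, and each ILP has $O(2^k)$ variables and constraints and is therefore solvable in $2^{O(k2^k)}$ time. Multiplying these factors and adding the polynomial-time computation of~$V'$, of the equivalence classes, and of~$G_2$ yields a bound of the form $f(k)\cdot\mathrm{poly}(n)$, placing the problem in FPT.

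I expect the main obstacle to be the completeness argument, specifically justifying that restricting each 2-class to $\min(|C|,k)$ representatives and modeling its contiguous subclasses by boomerangs loses no optimal solution. The key claim to establish is that any optimal drawing can be normalized so that its alignment pattern within a 2-class depends only on \emph{how many} vertices lie in each contiguous subclass and how they align at the two cover vertices~-- quantities the ILP already optimizes~-- and that the few retained representatives suffice to pin down the boundaries of the at most~$k$ contiguous subclasses through the fixed embedding. Coupled with this is the delicate geometric point that the ILP's purely combinatorial count of aligned pairs is always simultaneously realizable inside the boomerangs; verifying that the insertion and two-sided completion never conflict across the $O(2^k)$ boomerangs is the part I would treat most carefully.
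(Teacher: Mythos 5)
Your proposal follows the paper's proof essentially step for step: the same enumeration of $\mathcal C$, $\mathcal E$, and $\pi$, the same realizability test via \cref{lem:realize-G-two}, the same ILP for reinserting the 1- and 2-class vertices, and the same $f(k)\cdot\mathrm{poly}(n)$ accounting with factors of order $2^{O(k2^k)}$. The completeness step you flag as the main obstacle is precisely the part the paper also treats only implicitly~-- it is justified by the structural observations made earlier in the section (a 2-class splits into at most $k$ contiguous 2-classes, so the $\min(|C|,k)$ retained representatives and the boomerang model capture every alignment pattern of an optimal drawing), while the paper's written proof, like your soundness paragraph, concentrates on the stub-insertion argument showing that an ILP optimum is geometrically realizable.
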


\begin{proof}
	The vertex cover number of the input graph $G$ with $n$ vertices
	can be computed in $O(kn+1.274^{k})$ time
	\cite{chenKanjXia:tcs10}.
	The number of subsets $\mathcal C$ of 2-classes is in
        $O(2^{3k})$. Then we vary over all embeddings of $G_2(\mathcal
        C)$; see the respective paragraph on
        \cpageref{PAGE:embeddings}.
	The number of embeddings $\mathcal E$
        of the graph $G_2(\mathcal C)$ with $O(2^k)$ vertices is in
        $O((c2^k)!)$ for some constant $c$, and
	the number of possible pairings $\pi$ is also bounded by a function of $k$. 
	The plane graph $G_2(\mathcal C, \mathcal E,\pi)$ has $O(2^k)$ vertices.
	Hence, testing whether the alignment and non-convexity requirements of $G_2(\mathcal C, \mathcal E,\pi)$ can be realized geometrically can be done in $2^{O(k2^k)}$ time. 
	If the answer is yes, we can solve the ILP in $2^{O(k2^k)}$ time. 
	In the end we have to determine the minimum over all choices of $\mathcal C$, $\mathcal E$, $\pi$.
	
	In order to prove correctness,
	consider now a hypothetical geometric realization and an optimum solution of the ILP. Observe that the ILP is always feasible.
	It remains to show that the 2-class vertices can be inserted into
	the boomerangs and the 1-class vertices can be added so as to
	fulfill the alignment requirements prescribed by the ILP.  For each
	triplet $(v,b,d)$, where $b$ and~$d$ are two boomerangs incident to a
	common vertex cover vertex $v$, we add $x_{v,b,d}$ stubs of aligned
	edges in a close vicinity of $v$ inside the region reserved for $b$
	and~$d$. 
	Further, for each boomerang $d$ incident to a vertex cover vertex $v$ do the following.
	For each edge $e$ of $G_2$ that we required to be aligned with an edge in $d$ at $v$, we add a stub incident to $v$ into $d$ aligned with~$e$. We also add $y_{v,d}$ stups incident to $v$ into $d$.
	If the total number of stubs in all boomerangs
	representing contiguous subclasses of a 2-class $C$ are less
	than~$|C|$, we add more stubs into arbitrary boomerangs associated with~$C$.  
	We extend the stubs until the respective rays meet.  
	Observe that this intersection point is inside the boomerang
        and that no crossings are generated; see
        \cref{SUBFIG:insertingBoomerangs}.
\end{proof}

\section{Parameters: Segment and Line Cover Number}
\label{sec:segment-number}

In this section, we first study the parameterized complexity of computing the segment number of a planar graph with respect to the segment number and the line cover number separately. Recall that, if
$\seg(G)\le k$, then $\lin(G)\le k$. Then, we study the parameterized complexity of colored versions of the segment number and the line cover number with respect to their natural parameters.

We start by reviewing an FPT algorithm for the line cover number~\cite{cflrvw-cdgfl-JGAA23}.
A \emph{path component} (resp., \emph{cycle component}) of a graph is
a connected component isomorphic to a path (resp., a cycle).  The
algorithm first removes all the path components of the input graph
$G$, as they can be placed on a common line. 
Then each of the remaining connected components of $G$ is reduced by replacing long 
maximal paths. More precisely, maxmal paths of length greater than 
$\binom{k}{2}$ that contain only vertices of degree at most~$2$ are 
replaced by paths of length $\binom{k}{2}$, 
and cycle components with more than $\binom{k}{2}$ vertices are replaced by cycles of length $\binom{k}{2}$. 
Indeed, vertices of 
degree~2 are irrelevant since they can always be reintroduced by 
subdividing a straight-line segment of the path in a feasible solution 
of the reduced instance to obtain a feasible solution of the original instance. 
Note that the number of vertices of degree greater than $2$
must be at most $\binom{k}{2}$.  This yields an equivalent
instance~$G'$ with at most $1.1k^4$ vertices and $2k^4$ edges.
Next, the algorithm implicitly enumerates all line arrangements of $k$
lines.  For each such arrangement, the algorithm enumerates all
possible placements of the vertices of $G'$ at crossing points in the
arrangement. Observe that the vertices of degree greater than $2$ must
be placed at the crossing points of the lines. The instance is
accepted if at least one of the line arrangements can host $G'$.

In order to {\em explicitly} enumerate all line arrangements of $k$
lines (as needed in~\cref{algo:listIncidenceAlgo}), we can proceed as
described below. Observe that a line arrangement $\cal A$ of $k$ lines
defines a connected plane graph~$G_{\cal A}$ as follows.
The graph~$G_{\cal A}$ contains, for each crossing point~$p$
in $\cal A$, a vertex $u_p$, and, for every pair of crossing points
$p$ and $q$ that appear consecutively along a line of $\cal A$, an
edge~$\{u_p,u_q\}$.  Additionally, for each
half-line of $\cal A$ that starts at a crossing point~$p$, the graph
$G_{\cal A}$ contains a leaf adjacent to~$u_p$.  The clockwise order
of the edges around each vertex $u_p$ of
$G_{\cal A}$ (of degree larger than $1$) is naturally inherited from
the order of the crossing points of $\cal A$ around $p$. By
construction, $\cal A$ defines a {\em unique} connected plane
graph~$G_{\cal A}$ together with a {\em unique} covering of the
edges of $G_{\cal A}$ with $k$ edge-disjoint paths starting and ending
at leaves.  Thus, to enumerate all line arrangements of
$k$ lines, we execute the following steps. We consider all possible
planar graphs with at most $\binom{k}{2} + 2k$ vertices that contain $2k$
leaves, all their planar embeddings, and all their edge
coverings with $k$ edge-disjoint paths, if any. For each such
triplet (of a planar graph, an embedding, and an edge covering), we
test (using again Renegar's algorithm \cite{Renegar92a,Renegar92b})
whether the considered graph admits a planar straight-line drawing
preserving the selected embedding, in which each edge-disjoint path of
the considered covering is drawn on a straight-line. Each triplet that
passes the previous test corresponds to a (combinatorially different)
line arrangement of $k$ lines.

To compute the segment number of $G$, first observe that each path
component requires a segment in a drawing of $G$.  Assume that $G$ has
$p$ path components.  Then, let $k' = k-p$ be our new parameter and
remove the path components from~$G$.
Also, observe that the choices above only leave undecided to which
line of the arrangement the \emph{hanging paths} of $G'$, i.e., the
paths that start at a high degree vertex and end at a degree-$1$
vertex, are assigned. Clearly, the number of such choices are also
bounded by a function of $k$. Therefore, for every line arrangement
and every placement of the vertices of $G'$ to the crossing points of
the arrangement, we consider all possible ways to assign the
hanging paths to the lines and compute the actual number of segments
determined by these choices. We conclude that $\seg(G)\le k$ if and
only if we encountered a line arrangement and a placement of vertices
of $G'$ in this line arrangement that determines at most $k'$
segments. This immediately implies the following theorem.

\begin{theorem}\label{thm:segmentNumber}
	\textsc{Segment Number} is  FPT with respect to the natural parameter. 
\end{theorem}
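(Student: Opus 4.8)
The plan is to reduce \textsc{Segment Number} to the line-cover machinery reviewed above, exploiting the inequality $\lin(G)\le\seg(G)$. If $\seg(G)\le k$, then in particular the edges of any optimal drawing are covered by at most $k$ lines, so every drawing we need to inspect places its edges on an arrangement of at most $k$ lines. Hence my first step is to enumerate, exactly as for the line cover number, all combinatorially distinct arrangements of at most $k$ lines together with all placements of the relevant vertices at their crossing points, each certified as realizable by Renegar's procedure (\cref{thm:renegar}). The only new ingredient compared to the line-cover algorithm is that I must \emph{count segments} rather than merely verify that $k$ lines suffice.

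Before invoking the enumeration I would isolate the path components. Distinct path components placed on a common line are separated by a gap and therefore cannot merge into a single segment, so each path component contributes at least one segment of its own. Thus, if $G$ has $p$ path components, these account for exactly $p$ segments; I remove them and continue with the reduced target $k'=k-p$ on the remaining graph. On that graph I apply the same reduction as in the line-cover algorithm---suppressing long paths through degree-$2$ vertices and shortening long cycles---to obtain an equivalent instance $G'$ with $O(k^4)$ vertices whose vertices of degree greater than two number at most $\binom{k}{2}$. Since each vertex placed at a crossing of an arrangement of at most $k$ lines has degree at most $2k$, the number of \emph{hanging paths} of $G'$ (paths from a high-degree vertex to a degree-$1$ vertex) is bounded by a function of $k$, and so is the number of ways to assign them to the lines through their anchoring crossing.

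Next, for every enumerated arrangement and every placement of the vertices of $G'$ at its crossings---where, as observed, the vertices of degree greater than two are forced to lie at crossings---the only remaining freedom is precisely this assignment of the hanging paths to lines. For each fixed arrangement, placement, and assignment I would directly count the resulting segments: on each line the number of segments is the number of maximal collinear chains of edges, which is completely determined by the combinatorial data at hand. I then declare $\seg(G)\le k$ exactly when some enumerated configuration realizes the reduced graph with at most $k'$ segments; adding back the $p$ path-component segments recovers the original bound.

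The main obstacle, and the conceptual gap separating this from the line-cover result, is that covering the edges with $k$ lines says nothing about how the edges partition into maximal collinear chains: several segments may share one line, and two incident edges lying on a common line may or may not align at their shared vertex. The crux is therefore to argue that, once the arrangement and the placement of the high-degree vertices are fixed, the exact number of segments depends only on the boundedly many choices of how the hanging paths attach to the lines, so that an exhaustive search over these choices correctly minimizes the segment count over all drawings compatible with the arrangement. Granting this, the whole procedure runs in $f(k)\cdot\mathrm{poly}(n)$ time, which yields the claimed fixed-parameter tractability.
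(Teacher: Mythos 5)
Your proposal matches the paper's own proof essentially step for step: isolating the $p$ path components to reduce the parameter to $k'=k-p$, reusing the line-cover reduction and the enumeration of arrangements of $k$ lines with vertex placements at crossing points (justified by $\lin(G)\le\seg(G)$), and then minimizing the segment count over the boundedly many assignments of hanging paths to lines. Even the point you flag as the crux---that once the arrangement, the placement, and the hanging-path assignment are fixed, the number of segments is combinatorially determined---is treated in the paper at the same level of detail, namely as an observation, so your argument is not missing anything relative to the published one.
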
  

Note that the line cover number is lower or equal to the segment number. However, assuming no path component exists, we can show that the segment number is polynomially bounded by the line cover number.
\parametrizedProblem{\sc Segment Number by Line Cover Number}%
{A simple planar graph $G$, an integer~$s$.}%
{The line cover number $k=\lin(G)$ of $G$.}%
{Is the segment number of $G$ at most~$s$, that is, $\seg(G) \le s$?}

We exploit the following observation.

\begin{proposition}\label{obs:lineseg}
	Let $G_1,\ldots,G_t$ be the connected components of a graph
	$G$. The following hold:
	\begin{enumerate}
		\item $\seg(G)=\sum_{i=1}^t \seg(G_i)$.
		\item Let $G_1,\ldots,G_p$ be the path components of $G$ and let $G'=G\setminus\bigcup_{i=1}^p G_i$. Then, $\seg(G)=\seg(G')+p$. Moreover, if $V(G') \neq \emptyset$ then $\lin(G)=\lin(G')$, and $\lin(G)=1$ otherwise.
		\item If $G$ contains no path component, then $\seg(G)\le \lin^2(G)$.
	\end{enumerate}
\end{proposition}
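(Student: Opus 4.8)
The plan is to treat the three parts in sequence, getting the first two almost for free from one structural fact and reserving the real work for the third. The fact is that, in any planar straight-line drawing, a segment is a maximal set of collinear edges whose union is a line segment; since two non-overlapping edges can be collinear and flush only if they share a vertex, a segment is really a straight sub-path of $G$ and hence lies entirely inside a single connected component. For part~1 this gives $\seg(G)\ge\sum_i\seg(G_i)$ at once: in an optimal drawing of $G$ no segment crosses a component boundary, so the segments partition according to components, and the restriction of the drawing to each $G_i$ is a valid drawing realizing at least $\seg(G_i)$ segments. For the reverse inequality I would draw each $G_i$ optimally, place the drawings in pairwise disjoint disks, and rotate each generically so that no segment of one component becomes collinear and flush with a segment of another; the segment count is then exactly $\sum_i\seg(G_i)$.

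For part~2, a path is drawn with exactly one segment, so part~1 yields $\seg(G)=\sum_{i\le p}\seg(G_i)+\seg(G')=p+\seg(G')$. For the line cover number, $\lin(G')\le\lin(G)$ because a drawing of $G$ restricts to one of $G'$; conversely, if $V(G')\ne\emptyset$ then every non-path component carries an edge, so an optimal drawing of $G'$ uses at least one line, and each path component can be laid out on a free portion of one of these lines, giving $\lin(G)\le\lin(G')$. If instead $V(G')=\emptyset$, then $G$ is a disjoint union of paths, all of which fit on a single common line, so $\lin(G)=1$.

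The heart of the proposition is part~3. Here I would fix a planar straight-line drawing $\Gamma$ realizing $k:=\lin(G)$, so every edge lies on one of $k$ lines, and bound the number of segments of $\Gamma$ (which upper-bounds $\seg(G)$) line by line, showing that each line carries at most $k-1$ segments. Fix a line $\ell$ and a segment $S$ on it. First, the segments on $\ell$ are pairwise vertex-disjoint, since two of them meeting at a vertex would present two collinear edges there and hence merge. Next, because $G$ has no path component, $S$ cannot be an entire component (a connected graph drawn on one line is a path), so some edge $f_S$ of $S$'s component leaves $\ell$; chasing a shortest path from $S$ to the rest of the component and using maximality of $S$ shows that such an off-line edge is in fact incident to a vertex $v_S$ of $S$. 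That edge lies on another cover line $\ell'$, whence $v_S\in\ell\cap\ell'$ is the unique intersection point of $\ell$ and $\ell'$. Distinct segments on $\ell$ supply distinct vertices $v_S$, and two distinct lines meet in only one point, so they must use distinct lines $\ell'$; as only $k-1$ lines differ from $\ell$, the line $\ell$ carries at most $k-1$ segments. Summing over the $k$ lines gives $\seg(G)\le k(k-1)\le k^2=\lin^2(G)$.

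The main obstacle is exactly this per-line count, and in particular the role of the no-path-component hypothesis: without it, $k$ disjoint paths on one line have $\lin=1$ but $\seg=k$, so the bound is false. The delicate point to get right is that \emph{every} segment genuinely reaches off its supporting line — notably a segment whose two endpoints are both leaves must still attach to the rest of its (non-path) component through an off-line edge at an interior vertex — and that the resulting assignment of segments on $\ell$ to the other cover lines is well defined and injective. Both rest only on elementary incidence facts (maximality of segments and that two distinct lines meet in at most one point), so once the reachability claim is pinned down the counting is immediate.
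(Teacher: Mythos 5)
Your proof is correct and follows essentially the same route as the paper: the paper also dismisses items~1 and~2 as obvious and proves item~3 by noting that, absent path components, every segment must contain a crossing point of its supporting line, that distinct segments on one line are disjoint, and hence each of the $k$ lines carries at most $k-1$ segments, giving $\seg(G)\le k(k-1)\le \lin^2(G)$. Your write-up merely makes explicit what the paper leaves implicit (the maximality/connectivity argument showing each segment really does reach a crossing point, and the injective assignment of segments to other lines).
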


\begin{proof}
	Items 1 and~2 are obvious.  For the last item, consider a
	drawing~$\Gamma$ of~$G$ on $k$ lines.  Since $G$ has no path
	component, every segment of~$\Gamma$ must contain at least one of the at most $(k-1)$ crossing points
	of the line that supports it. 
	Different segments supported by the same line are disjoint.
	Therefore, the number of segments of~$\Gamma$ cannot exceed $k(k-1)$.
\end{proof}

\begin{lemma}\label{the:segByLineCover}
	Computing the segment number is FPT parameterized by the line cover number of the input graph.
\end{lemma}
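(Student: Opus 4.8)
The plan is to kernelize the input down to a graph of size polynomial in $k=\lin(G)$ that has exactly the same segment number, and then to run the exponential-time algorithm of \cref{lem:expTimeAlgo} on the kernel. First I would remove the path components: by \cref{obs:lineseg}(2), if $G$ has $p$ path components and $G'$ denotes the union of the remaining components, then $\seg(G)=\seg(G')+p$ and $\lin(G')=k$ (the case $V(G')=\emptyset$ being trivial, as then $\seg(G)=p$). So it suffices to compute $\seg(G')$, where $G'$ has no path component and therefore, by (the proof of) \cref{obs:lineseg}(3), satisfies $\seg(G')\le k(k-1)<k^2$. In particular, every segment-minimum drawing of $G'$ has fewer than $k^2$ segments, so along any single maximal path of degree-$2$ vertices such a drawing can make at most $k(k-1)-1$ bends.

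Next I would construct the kernel $G''$ essentially as in the line-cover algorithm reviewed above. Since $\lin(G')=k$, there is a drawing of $G'$ on $k$ lines, and every vertex of degree at least $3$ is an intersection point of two of these lines; hence there are at most $\binom{k}{2}$ such branch vertices, each of degree at most $2k$. The rest of $G'$ consists of maximal degree-$2$ paths joining branch vertices (or ending at a leaf) together with cycle components. I would replace every maximal degree-$2$ path of length exceeding $k^2$ by a path of length exactly $k^2$, and replace every cycle component by a triangle. By the handshaking bound the number of maximal degree-$2$ paths is $O(k^3)$, and since each cycle component contributes exactly three segments their number is at most $\seg(G')/3=O(k^2)$; hence $G''$ has $O(k^5)=\mathrm{poly}(k)$ vertices and, like $G'$, has no path component.

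The heart of the argument is that this reduction preserves the segment number, i.e.\ $\seg(G')=\seg(G'')$. For cycle components this is immediate from additivity (\cref{obs:lineseg}(1)) and the fact that every cycle has segment number $3$. For a maximal degree-$2$ path $P$ of length $\ell>k^2$ replaced by a path $P'$ of length $\ell'=k^2$ I would argue through the counting identity of \cref{LEMMA:max-aligned}. In any drawing, $P$ is drawn as a polyline; if it has $b$ bends, then within $P$ exactly $\ell-1-b$ consecutive edge pairs are aligned, and whether the first and last edges of $P$ align with edges outside $P$ depends only on the directions of these two extreme edges at the endpoints of $P$. Given a segment-minimum drawing of $G'$, its restriction to $P$ uses $b+1\le\seg(G')\le k(k-1)<k^2=\ell'$ segments, so $P'$ is long enough to trace exactly the same polyline: I place $b$ of its internal vertices at the bend points and the remaining ones on the straight pieces (they merely subdivide segments and create no new bend). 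This keeps the occupied region and the extreme edge directions unchanged, so no crossing is introduced, the rest of the drawing is untouched, the alignments at the endpoints are preserved, and both the number of edges and the number of aligned pairs change by $\ell'-\ell$; by \cref{LEMMA:max-aligned} the segment count is therefore unchanged, giving $\seg(G'')\le\seg(G')$. The reverse inequality is symmetric: starting from a segment-minimum drawing of $G''$, whose restriction to $P'$ uses $b'+1\le\seg(G'')\le\seg(G')\le k(k-1)<\ell$ segments, the longer path $P$ can trace the very same polyline. Hence $\seg(G')=\seg(G'')$.

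Finally I would apply \cref{lem:expTimeAlgo} to $G''$, computing $\seg(G'')=\seg(G')$ in $2^{O(k^5)}$ time, and answer ``$\seg(G)\le s$'' using $\seg(G)=\seg(G'')+p$. The overall running time is $2^{O(k^5)}\cdot\mathrm{poly}(n)$, which is FPT in $k=\lin(G)$. The main obstacle is exactly the path-shortening step: one must certify that replacing a long degree-$2$ path by a shorter one neither forces additional bends nor disturbs the planarity and the alignments of the rest of the drawing. The first issue is handled by the a priori bound $\seg(G')<k^2$ on the number of bends any single path can need in an optimal drawing, so that a path of length $k^2$ still has enough internal vertices; the second is handled by re-tracing the identical polyline, which keeps the occupied region and the extreme slopes fixed and thus leaves every other alignment, as well as planarity, intact.
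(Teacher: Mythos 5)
Your proposal is correct, but it takes a genuinely different route from the paper's proof. Both arguments begin identically: strip the $p$ path components, use \cref{obs:lineseg} to reduce to computing $\seg(G')$ for a graph $G'$ without path components, and invoke $\seg(G')\le\lin^2(G')=k^2$. At that point the paper is done in one line: since the segment number of $G'$ is bounded by $k^2$, the \emph{natural} parameter is bounded by a function of $k$, so it simply calls the FPT algorithm of \cref{thm:segmentNumber} (all the heavy machinery---enumerating line arrangements, vertex placements, and routings---is hidden inside that theorem). You instead use the bound $k(k-1)$ only as a bound on the number of bends an optimal drawing can afford, and you build an explicit polynomial kernel: cycle components become triangles (correct, since every cycle has segment number $3$ and \cref{obs:lineseg} gives additivity over components), and every maximal degree-$2$ path of length greater than $k^2$ is shortened to length $k^2$, justified by a re-tracing argument together with the counting identity behind \cref{LEMMA:max-aligned}; then you run the $2^{O(n)}$-time algorithm of \cref{lem:expTimeAlgo} on the $O(k^5)$-vertex kernel. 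Your approach is self-contained (it never uses \cref{thm:segmentNumber}, only \cref{lem:expTimeAlgo} and \cref{LEMMA:max-aligned}) and it proves something the paper's argument does not state, namely that the problem admits a polynomial kernel in $k$; the paper's proof buys brevity by reusing machinery needed anyway for the natural parameter. Note that your path-shortening step is essentially the ``degree-2 vertices are irrelevant'' reduction the paper describes when reviewing the line-cover algorithm, but upgraded with the segment-count bookkeeping that the line-cover setting does not require---that upgrade is exactly the nontrivial content of your argument, and it is sound.

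Two small repairs are needed, neither of which breaks the proof. First, the claim that the polyline of $P$ consists of $b+1\le\seg(G')$ straight pieces is slightly off: the first and last pieces may merge with edges outside $P$ into longer segments of the whole drawing, so only the internal pieces are guaranteed to be segments in their own right, giving the bound $b+1\le\seg(G')+2\le k(k-1)+2$. Since $k(k-1)+2\le k^2$ for $k\ge 2$ (and for $k\le 1$ the graph $G'$ is empty), your slack of $k^2-k(k-1)=k$ absorbs this. Second, the shortening is applied to many paths in succession, so the bend bound must hold for each intermediate graph $H$; this is fine because each single replacement preserves the segment number, hence $\seg(H)=\seg(G')\le k(k-1)$ throughout, but it deserves a sentence. (Also, your algorithm needs the value $k=\lin(G)$ to set the shortening threshold; it can be obtained in FPT time by the line-cover algorithm of Chaplick et al., a convention the paper's proof shares implicitly.)
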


\begin{proof}
	Suppose that we are given an input graph $G$ with $\lin(G)=k$.  We
	first identify its connected components. Let $p$ be the number of path components of $G$.  Delete the path components of $G$, if any, to obtain a graph $G'$ for
	which $\seg(G')\le \lin^2(G') = \lin^2(G) = k^2$.  So the segment
	number of $G'$ is bounded by a function of $k$ and hence, by \cref{thm:segmentNumber}, can be computed in FPT time parameterized by $k$.  Once $\seg(G')$ is
	computed, output $\seg(G)=\seg(G')+p$, by Item~3 of \cref{obs:lineseg}.
\end{proof}

Motivated by list coloring, we generalize both  \textsc{Segment Number} and \textsc{Line Cover Number}  by prescribing admissible segments or lines for certain edges.
\parametrizedProblem{\sc List-Incidence Segment Number (Line Cover Number)}
{A simple planar graph $G$ and, for each $e \in E(G)$, a list $L(e)\subseteq [k]$.}%
{An integer $k$.}%
{Does there exist a planar straight-line drawing of $G$ and a labeling
	$\ell_1, \ell_2, \dots, \ell_k$ of the segments (supporting lines) of this
	drawing such that for every $e\in E(G)$, $e$ is drawn on a segment (line)
	in $\{\ell_i \colon i\in L(e)\}$?}

\begin{theorem}\label{thm:listlines-and-segments}
  The problems \textsc{List-Incidence Line Cover Number} and
  \textsc{List-Incidence Segment Number} are FPT with respect to the
  natural parameter.
\end{theorem}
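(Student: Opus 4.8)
The plan is to reuse, essentially verbatim, the two FPT algorithms recalled above — the one for \textsc{Line Cover Number} and the one behind \cref{thm:segmentNumber} — and to observe that the list constraints are almost entirely a \emph{combinatorial} overlay on top of them. Both algorithms reduce the instance to an equivalent one of size bounded in $k$, enumerate the combinatorially distinct line arrangements of $k$ lines together with a placement of the reduced graph (and, for the segment version, a grouping of collinear edges into segments and an assignment of hanging paths to lines), and then test each candidate configuration geometrically with Renegar's procedure. The key point is that such a configuration already fixes, for every drawn edge, which line (segment) carries it; hence the geometric realizability test is untouched, and the lists only filter which configurations survive.

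The cheap part is therefore the label handling. For each configuration I would additionally enumerate all $\le k!$ bijections between the $k$ lines (for the segment version, the at most $k$ segments) and the labels $[k]$. Once such a labeling is fixed, every edge $e$ inherits a label $j$, and I discard the configuration unless $j\in L(e)$ for all edges; since all edges sharing a line (segment) receive the same label, this test is consistent. This multiplies the running time only by $k!$ and by the $2^{|E|}$-free combinatorial enumeration already present, so it remains FPT, and Renegar's test from \cref{lem:firstOrder}/\cref{thm:renegar} is invoked exactly as in the uncolored setting.

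The real work is making the size reduction \emph{list-aware}, because degree-$2$ vertices can no longer be discarded blindly. Here I would use the geometric fact that in any drawing on $k$ lines a maximal path of degree-$\le 2$ vertices bends at most $\binom{k}{2}$ times: each bend sits at the crossing of the two lines it joins, and a simple path visits each such crossing at most once. Consequently the path decomposes into at most $\binom{k}{2}+1$ straight runs, and within one run all edges lie on a common line, so that line's label must belong to the intersection of their lists. The information the path exposes to the rest of the instance is thus its \emph{profile}: the family of label sequences $(d_1,\dots,d_t)$ with $t\le\binom{k}{2}+1$ and consecutive entries distinct that arise from partitioning the path's edges into $t$ consecutive runs with $d_s$ contained in every list of run~$s$. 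This profile is one of only $f(k)$ possibilities and is computable in $O(n)\cdot f(k)$ time by a simple dynamic program over the edges; I would carry it as an annotation on a skeleton consisting of the $O(\binom{k}{2})$ vertices of degree $>2$ together with the $\le\binom{k}{2}$ bend slots of each path, so the enumeration proceeds as before while consulting the profile to decide admissible bend patterns.

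The main obstacle is proving this reduction correct, i.e.\ establishing the reinsertion lemma: that in a feasible labeled drawing of the skeleton the suppressed degree-$2$ vertices can be put back by subdividing the appropriate straight run, every new edge then lying on a line whose label is in its list, and conversely that any feasible drawing of the original instance induces a profile-respecting label sequence on each path. For \textsc{List-Incidence Segment Number} there is an extra twist inherited from \cref{thm:segmentNumber}: a path component can no longer be charged a single segment, since its edges may have incompatible lists, so instead of removing it I would retain it and handle it through the same profile mechanism, charging it the minimum number of runs into which it splits and accounting these against the global pool of $k$ labels enumerated above. Once the bend bound, the profile dynamic program, and the reinsertion lemma are in place, correctness and the $f(k)\cdot\mathrm{poly}(n)$ running time follow exactly as for the uncolored problems, yielding the claimed FPT algorithms for both list-incidence variants.
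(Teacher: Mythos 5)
Your core machinery is sound and is essentially the paper's own: enumerate the combinatorially distinct arrangements of $k$ lines, the placements of the degree-greater-than-$2$ vertices at crossing points, the bend patterns of the maximal degree-$2$ paths (the paper's ``routings'' of ``supergaps''), and all $k!$ namings of lines (respectively, of the at most $k$ segments); test geometric realizability via Renegar; and check list-feasibility of each path along its bend pattern by dynamic programming. Your ``profile'' is a precomputed repackaging of the paper's \textsc{CheckRouting} table: the paper runs the DP on demand for each chosen routing, you precompute the set of admissible label sequences per path, and both rest on the same $O(k^2)$ bend bound for a simple path. Your reinsertion lemma holds for the same reason the paper's DP is correct: a path prefix is realizable with its last vertex strictly inside a straight run if and only if it is realizable with that vertex at the crossing point ending the run, and additional degree-$2$ vertices merely subdivide a run.

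There are, however, two genuine gaps, both located exactly where the paper says the difficulty lies (``paths are the most troublemaking components''). First, cycle components are never addressed: your profile mechanism is defined for maximal degree-at-most-$2$ paths attached to the skeleton and for path components, but a cycle component has no endpoints, cannot lie on a single line, and requires a separate treatment via \emph{closed} routings of supergaps, together with an enumeration of which vertex of the cycle sits at a chosen bend point and of the traversal direction (an $O(n)$ factor in the paper). Second, your handling of path components in the segment version~--~``charging it the minimum number of runs into which it splits''~--~is unsound as stated: minimizing the number of runs is not the right criterion, because \emph{which} labels a component consumes interacts globally with the other components through the shared pool of $k$ segment labels. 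The paper stresses precisely this: even when a path component has a universal index, it may be necessary to draw it with bends (so as to keep that label free for another part of the graph), and when it is drawn as a single segment, feasibility may depend on which universal index is chosen; these choices must be enumerated (contributing a factor of order $(k+1)!$ over the at most $k$ components), not resolved greedily. Unless your profile mechanism for free paths is explicitly an exhaustive enumeration over all admissible label sequences and bend patterns~--~in which case the ``minimum number of runs'' clause should simply be deleted~--~the argument does not go through for \textsc{List-Incidence Segment Number}.
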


Note that \cref{thm:listlines-and-segments} generalizes
\cref{thm:segmentNumber}, but the algorithm for \textsc{Segment
  Number}, which yields \cref{thm:segmentNumber}, is faster than the
algorithm for \textsc{List-Incidence Segment Number}.
In order to prove \cref{thm:listlines-and-segments}, we first consider the case where the input does not
contain any path or cycle components.  

\begin{figure}[tb]
	\centering
	\subcaptionbox{\label{fig:listIncA}}[0.495\linewidth]{\centering \includegraphics[page=2]{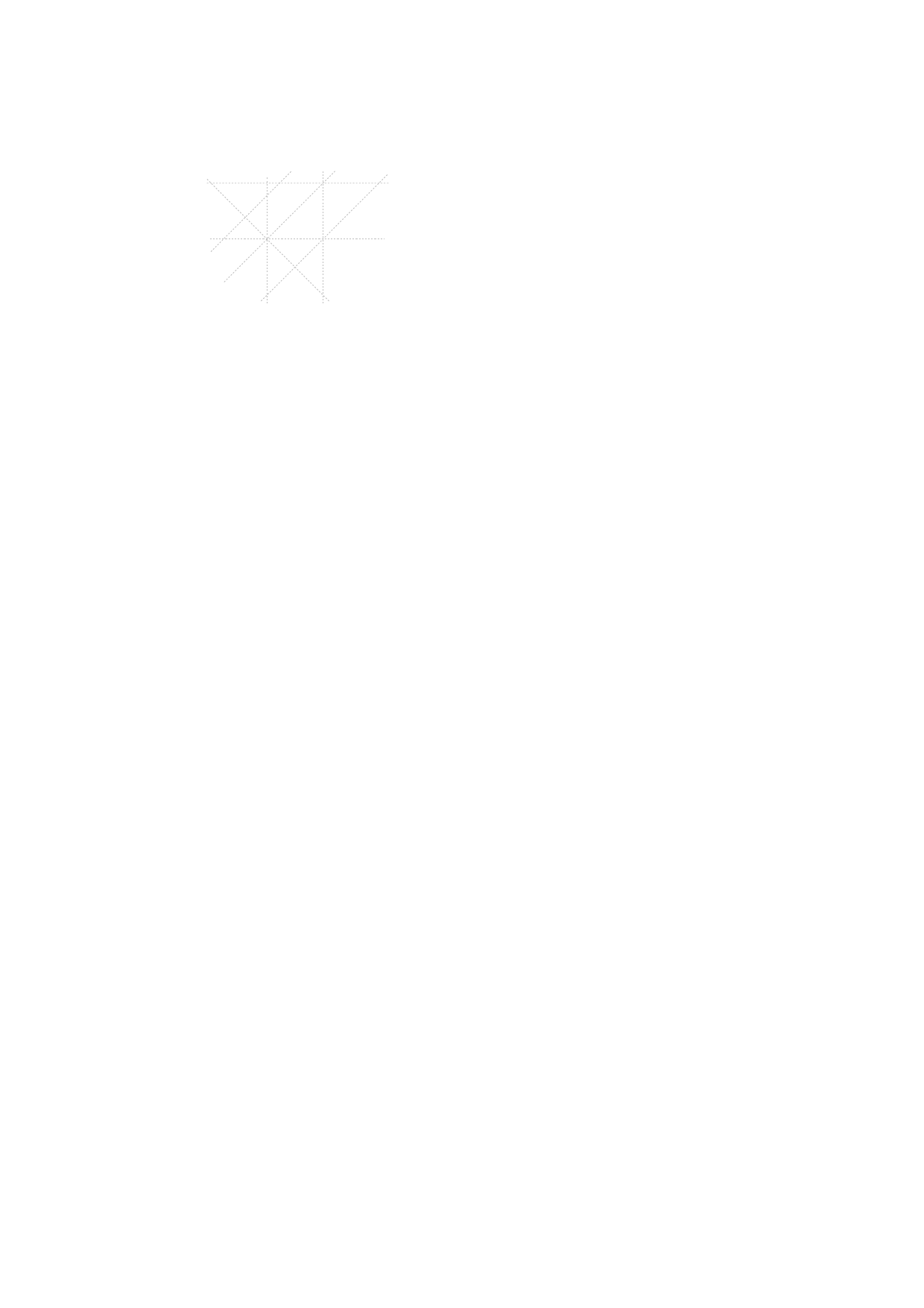}} 
	\hfil
	\subcaptionbox{\label{fig:listIncB}}[0.495\linewidth]{\centering \includegraphics[page=3]{listInc}}
	\caption{\label{fig:listInc}
	Two routings of the light paths of $G$ on the same pair of a line arrangement and a vertex-placement of $G_{>2}$. The line arrangement is shown by \mbox{dotted} lines, the vertices of $G_{>2}$ are drawn as diamonds, and other vertices~are~drawn~as~\mbox{circles.}}
\end{figure}

\subparagraph{Graphs without Path and Cycle Components}

We first consider the {\sc List-Incidence Line Cover Number} problem.
Note that a feasible instance can have at most $\binom{k}{2}$ vertices of degree greater than $2$.
    Let $G_{>2}$ be the subgraph of $G$ induced by vertices of degree greater than $2$. We enumerate all possible
    arrangements of $k$ lines and all possible placements of vertices
    of $G$ of degrees greater than $2$ in the crossing points of the
    lines.
    The placement of vertices determines a
    straight-line drawing of $G_{>2}$, 
    (see green edges in \cref{fig:listInc}). We continue with the considered arrangement/vertex-placement pair only if the drawing of $G_{>2}$ is \emph{proper}, i.e., it is planar, and each edge of $G_{>2}$ belongs to a line of the arrangement and does not pass through another vertex of degree greater than $2$.
    A line segment that connects two consecutive crossing points 
    and does not correspond to an edge of $G_{>2}$ is called an
    \emph{gap}.
    Let $a$ and $b$ be two gaps. We say that $a$ is \emph{aligned} to $b$ (in
    this order) if they belong to the same line and
    the ending point of $a$ is the starting point of $b$. 
    Further, $a$ is \emph{adjacent} to $b$ (in this order) if
    they do not belong to the same line and the ending point of $a$ is
    the starting point of~$b$.
		
    Now consider a path $P=\langle u_0,u_1,\ldots,u_t \rangle$ in $G$ such that
    $\deg(u_0) > 2$, i.e., $u_0 \in V(G_{>2})$, $\deg(u_1) = \deg(u_2)
    = \dots = \deg(u_{t-1})=2$, and $\deg(u_t)$ is either $1$ or
    greater than $2$.  We call such a path \emph{light}.  Consider the
    representation of a light path in a solution.  The path leaves the
    vertex $u_0$ along
    some line, then turns to another line, possibly several times,
    until it either ends in a crossing point, where vertex $u_t$ had
    been placed (when $\deg(u_t)>2$) or in an inner point of some
    gap (when $\deg(u_t)=1$, note that w.l.o.g.\ we may assume
    that vertices of degree $1$ are not placed 
    in crossing points). This describes the \emph{routing} of $P$.
    Every maximal sequence of consecutive gaps that belong to the
    same line will be called a \emph{supergap} in the
    routing. In particular, we can represent a routing $R$ of $P$ by
    the sequence $\langle S_1, S_2, \dots, S_h \rangle$ of
    supergaps such that $S_i$ and $S_{i+1}$ are
    consecutive.  Every two consecutive gaps within a supergap 
    are aligned, while for two consecutive supergaps, the last
    gap of the first one is adjacent to the first gap of the
    second one (see \cref{fig:listInc}). 
    
    We use the following strategy. For every light path, we select its
    routing as a sequence of supergaps.  Furthermore, we enumerate
    over all possible $k!$ namings of the lines of the arrangement with
    labels in $\ell_1, \ell_2, \ldots, \ell_k$ and proceed only with
    those namings that determine a \emph{compatible drawing} of
    $G_{>2}$, i.e., one in which, for each edge $e$ of $G_{>2}$, it
    holds that $e$ lies on a line  $\ell_i$ with $i\in L(e)$.
    
    Let $\cal R$ be a collection of routings for the light paths. We check if $\cal R$ is \emph{feasible}, i.e., any two routings in $\cal R$ are non-crossing and internally disjoint, and each of them is internally disjoint with the drawing of $G_{>2}$. 
	For every light path, we check for each of its edges $e$ whether the list $L(e)$ allows its allocation along
    this routing. The first supergap $S_1$ of a routing must
    start in vertex $u_0$. The last one must end in $u_t$ if
    $\deg(u_t)>2$.  (If $\deg(u_t)=1$, there is no restriction on~$S_h$.)
    
    The strategy described above is summarized in
    \cref{algo:listIncidenceAlgo}.  To check whether the
    lists assigned to the edges of a path allow their allocation along a
    given routing, \cref{algo:listIncidenceAlgo} exploits the
    procedure \textsc{CheckRouting} (see
    \cref{algo:checkRoutingAlgo}), which works as follows.
    \begin{algorithm}[tb]
    	\SetKwInOut{Input}{Input}
    	\SetKwInOut{Output}{Output}
    	\Input{A planar graph $G$, an integer $k$, and for each $e \in E(G)$, a list $L(e)\subseteq [k]$.}
    	\Output{\textsf{true} if and only if there exists a planar straight-line drawing of $G$ and a labelling $\ell_1, \ell_2, \dots, \ell_k$ of the supporting lines of this drawing such that for every $e\in E(G)$, $e$ is drawn on a line in $\{\ell_i \colon i\in L(e)\}$.}
    	\ForEach{arrangement $\cal A$ of $k$ lines}
    	{
    		\ForEach{placement $\cal P$ of vertices of degree greater than 2 into crossing points of $\cal A$}
    			{
    				\If{$\cal P$ determines a proper drawing of $G_{>2}$}
    						{
    							\ForEach{collection of routings $\cal R$ of light paths of $G$}
    								{
    									\ForEach{naming of the lines in $\cal A$ with labels in $\ell_1, \ell_2, \ldots, \ell_k$}
    									{
	    									\If{$\cal R$ is feasible and the drawing of $G_{>2}$ is compatible
    										}
	    										{
	    											$X \leftarrow \mathsf{true}$;\\ 
	    											\ForEach{light path $P$ and its routing $R_P \in \cal R$}
	    												{
	    													$X \leftarrow X\,\land$
	                                                                                                        \textsc{CheckRouting}($P$, $R_P$, $L$)
	    												}
	    											\If{$X = \mathsf{true}$}
	    												{
	    													\Return{\textsf{true}}
	    												}
	    										}
    								}
    							}
    						}
    			}
    	}
	\Return{\textsf{false}}
        \caption{\textsc{ListLineCoverNumber}($G$, $k$, $L$)}
        \label{algo:listIncidenceAlgo}
\end{algorithm}
    For a light path $P=\langle u_0,u_1,\ldots,u_t \rangle$ and a routing
    $R=\langle S_1, S_2,\ldots, S_h \rangle$, we fill a table
    $CR(i,j)\in\{\mathrm{\sf true,false}\},i=1,\ldots,t, j=1,\dots,h$
    via dynamic programming. Its meaning is that $CR(i,j)=$ {\sf true}
    if and only if the subpath $P_i=P[u_0,\ldots,u_i]$ is \emph{realizable}
    along  $R_j = \langle S_1,\dots,S_j \rangle$, i.e., it admits a
    planar geometric realization such that all the vertices $u_1, u_2,
    \dots, u_i$ are placed inside the supergaps of $R_j$ and every 
    edge $e\in P_i$ lies on a line whose index is in the list~$L(e)$.
    
    \begin{algorithm}[tb]
    	\SetKwInOut{Input}{Input}
    	\SetKwInOut{Output}{Output}
    	\Input{A light path $P$, a routing $R_P$, and for each $e \in E(G)$, a list $L(e)\subseteq [k]$.}
    	\Output{\textsf{true} if and only if $P$ is realizable along $R_P$.}
    	// {\em Initialization of the table}\\
    	\For{$i=1$ \KwTo $t$}
    		{
    			\For{$i=1$ \KwTo $h$}
    				{
    					$CR(i,j) \leftarrow$ {\sf false}; 
    				}
    		}
    	$\ell_x \leftarrow$ the supporting line of $S_1$;\\
    	\If{$x \in L(\{u_0,u_1\})$}
    		{
    			$CR(1,1) \leftarrow$ {\sf true};
    		}
    	// {\em Update of the table}\\
    	\For{$i=1$ \KwTo $t-1$}
    	{
    		\For{$i=1$ \KwTo $h$}
    		{
    			$\ell_x \leftarrow$ the supporting line of $S_j$;\\
    			\If{$CR(i,j)$ \KwAnd $x \in L(\{u_i,u_{i+1}\})$}
    			{
    				$CR(i+1,j) \leftarrow$ {\sf true};
    			}
    			\If{$j < h$}
    			{
    				$\ell_y \leftarrow$ the supporting line of $S_{j+1}$;\\
    				\If{$CR(i,j)$ \KwAnd $y \in L(\{u_i,u_{i+1}\})$}
    				{
    					$CR(i+1,j+1) \leftarrow$ {\sf true};
    				}
    			} 
    		}
    	}
    	// {\em Return feasibility}\\
    	\Return{$CR(t,h)$};
    	\caption{\textsc{CheckRouting}($P$, $R_P$, $L$)}
    	\label{algo:checkRoutingAlgo}
    \end{algorithm}
    
    The correctness of \cref{algo:listIncidenceAlgo} follows from the
    fact that we are performing an exhaustive exploration of the
    solution space.  The correctness of \cref{algo:checkRoutingAlgo}
    follows from the fact that $P_i=\langle{}u_0,\dots,u_i\rangle$ can
    be realized along the initial part $\langle{}S_1,\dots,S_j\rangle$
    of $R$ so that vertex $u_i$ is placed inside the
    supergap~$S_j$ if and only if $P_i$
    can be realized so that $u_i$ is placed on the crossing point that
    is the last point of $S_j$.  The former is used (and needed) when
    updating $CR(i,j)$ to $CR(i+1,j)$, the latter for updating
    $CR(i,j)$ to $CR(i+1,j+1)$.
		
    The following lemma states that
    \cref{algo:listIncidenceAlgo} is FPT with respect to~$k$.
		
\begin{lemma}\label{cla:algoListLineNumber}
  Let $F(k)$ be the time needed to enumerate all non-isomorphic
  line arrangements of $k$ lines.  Then
  \cref{algo:listIncidenceAlgo} runs in time
  $O(F(k) k^{k^6+3k}) n$.
\end{lemma}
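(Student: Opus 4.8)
The plan is to bound the total running time as the product of the number of iterations of the four nested loops of \cref{algo:listIncidenceAlgo} (over arrangements~$\mathcal A$, placements~$\mathcal P$, collections of routings~$\mathcal R$, and namings of the lines) with the cost of the innermost body, and to verify that every factor except one depends only on~$k$, the single remaining factor being the linear term~$n$. The outermost loop contributes~$F(k)$ by definition. For the placement loop, I would first argue that instances worth processing have at most $\binom{k}{2}$ vertices of degree greater than~$2$ and no vertex of degree exceeding~$2k$ (otherwise no placement of $G_{>2}$ at crossing points of $k$ lines can exist, and we reject immediately); hence the vertices of $G_{>2}$ can be mapped to the at most $\binom{k}{2}$ crossing points of~$\mathcal A$ in at most $\binom{k}{2}^{\binom{k}{2}} = k^{O(k^2)}$ ways. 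The naming loop contributes $k! \le k^{3k}$.

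The dominant factor is the number of collections of routings~$\mathcal R$, which I would bound as (routings per light path) raised to the power (number of light paths). For the number of light paths, I use that every light path has at least one endpoint of degree greater than~$2$ (a path both of whose endpoints have degree~$1$ is a path component, which is excluded in this subsection), so their number is at most $\sum_{\deg v > 2}\deg v \le 2k\binom{k}{2} = O(k^3)$ once the early rejection above has been applied. For the routings per light path, I observe that a routing is a simple walk in the plane graph $G_{\mathcal A}$ defined by the arrangement: its crossing points (at most $\binom{k}{2}$) are the vertices and its gaps (at most $O(k^2)$) are the edges, with maximum degree $O(k)$ since at each crossing point one may continue along any of the at most $k$ lines through it. Since a routing is an internally disjoint (hence simple) sequence of supergaps, it visits at most $\binom{k}{2}$ crossing points with at most $k$ choices of outgoing line at each, giving at most $k^{O(k^2)}$ routings per light path.

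Combining the two bounds yields at most $\bigl(k^{O(k^2)}\bigr)^{O(k^3)} = k^{O(k^5)}$ collections~$\mathcal R$. Multiplying this with the $k^{O(k^2)}$ placements still lands well within $k^{k^6}$, while the $k!$ namings are absorbed by the $k^{3k}$ term; thus all $k$-dependent loop counts together are $O(k^{k^6 + 3k})$. It remains to bound the innermost body. Checking feasibility of~$\mathcal R$ (pairwise non-crossing and internally disjoint routings, each disjoint from the drawing of $G_{>2}$) can be done in time polynomial in~$k$, since it only concerns the $O(k^2)$ gaps of~$\mathcal A$; and a single call to \textsc{CheckRouting} (\cref{algo:checkRoutingAlgo}) for a light path~$P$ of length~$t$ along a routing of $h \le O(k^2)$ supergaps fills a table of size $O(t\cdot h)$. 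Here lies the one place where~$n$ enters: a light path may be long, but since the total length of all light paths is at most the number of edges of~$G$, summing over all light paths bounds one full sweep of \textsc{CheckRouting} by $O(n\cdot k^2)$, i.e.\ linear in~$n$ with a $k$-only multiplier. Multiplying the per-iteration cost $O(n\cdot \mathrm{poly}(k))$ with the loop-count product $O(F(k)\,k^{k^6+3k})$ and absorbing the polynomial factors into the exponent gives the claimed running time $O(F(k)\,k^{k^6+3k})\,n$.

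I expect the main obstacle to be the two counting arguments that keep the number of routings per light path and the number of light paths bounded purely in terms of~$k$: naively both could scale with~$n$ (arbitrarily long and arbitrarily many degree-$2$ paths), and the crux is to see that the table width in \textsc{CheckRouting} is governed by the $O(k^2)$ supergaps rather than by the path length, so the only genuinely $n$-dependent quantity is the summed path length. This cleanly isolates the single linear factor~$n$ from all the factors depending only on~$k$.
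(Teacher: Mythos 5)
Your proposal is correct and follows essentially the same route as the paper's proof: loop-by-loop accounting with the same key counting arguments ($O(k^3)$ light paths via the degree bound $2k$ at crossing points, $k^{O(k^2)}$ routings per path hence $k^{O(k^5)}$ collections, $k!$ namings, and an innermost cost of $O(n\cdot\mathrm{poly}(k))$ by summing table sizes over the disjoint light paths). Your explicit early-rejection step just makes precise what the paper phrases as ``in a feasible instance,'' and your slightly different counts (e.g.\ $\binom{k}{2}^{\binom{k}{2}}$ placements instead of $\binom{k}{2}!$, turning-point sequences instead of supergap choices) land in the same asymptotic bounds.
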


\begin{proof}
	Recall that there are $5$ nested \emph{for loops} in
	\cref{algo:listIncidenceAlgo}. For the running time estimate of
	\cref{algo:listIncidenceAlgo}, we note first that
	\cref{algo:checkRoutingAlgo} takes $O(t \cdot h\log k)$ time since
	the table has $t \cdot h$ entries and the update step requires
	searching through an at-most-$k$-element list. Since $t\le n$ and
	the number $h$ of supergaps on a routing cannot exceed the
	total number of gaps, which is no more than $k^2$, the running
	time of \cref{algo:checkRoutingAlgo} is upper bounded by
	$O(nk^2\log k)$. Since both the light paths and the routings in a
	collection of paths/routings are pairwise disjoint, $O(nk^2\log k)$
	is actually an upper bound on the total execution time of the
	innermost (fifth) {\em for loop} of \cref{algo:listIncidenceAlgo},
	where the procedure \textsc{CheckRouting} is called on every light
	path and a corresponding given routing for the path.

	Clearly, the number $Arr(k)$ of non-isomorphic line arrangements of $k$ lines is upper bounded by the number of planar graphs with at most $\binom{k}{2} + 2k$ vertices that contain $2k$ leaves i.e., $Arr(k) \in 2^{O(k^2)}$. Therefore, the enumeration of all such arrangements in the outermost (first) {\em for loop} can be performed in $F(k)$ time, where $F$ is a computable function~\cite{cflrvw-cdgfl-JGAA23}.

	Since there are at most $k\choose{2}$ crossing points of the lines, the vertices of degree greater than $2$ can be placed in at most ${k\choose{2}}!$ ways, which bounds the number of times the second {\em for loop} is executed.
	
	Inside the second {\em for loop}, we first check if $G_{>2}$ is proper. For this, we test for each edge of $G_{>2}$, if it belongs to a line of the arrangement and if it does not pass through any vertex of $G_{>2}$. There are $O(k^2)$ edges to be checked and on each of them at most $k-3$ other crossing points that could create a conflict, and so this check can be performed in $O(k^3)$ time. (This also includes checking if no two edges of $G_{>2}$ cross.)
	
	In a feasible instance, there is no more than $k^3$ light
	paths, because they start in one of at most $\binom{k}{2}$
	vertices and in each starting vertex, the first edge of the
	path has at most $2k$ choices (there are at most $k$ lines
	passing through this point, and on each of them two possible
	directions to be used). A routing of a light path is a
	sequence of at most $k(k-1)/2$ supergaps since each line
	may host at most $(k-1)/2$ disjoint supergaps.
	The first supergap on a routing can be chosen in at most
        $2k(k-1) \in O(k^2)$ ways (the routing chooses a line to
        follow, the direction, and the last point of this
        supergap). Similarly, the second supergap can be chosen in
        $O(k^2)$ ways, etc. Thus routing one path can be chosen in at
        most $(2k(k-1))^{k(k-1)/2} < (2k)^{k(k-1)} = k^{k(k-1)\log k}
        \in O(k^{k^2\log k})$ ways, and the collection of routing in
        $\big(O(k^{k^2\log k})\big)^{k^3} = O(k^{k^5\log k})$ ways.
	Thus the number of choices in the third {\em for loop} is at
        most $O(k^{k^5\log k})$.
	
	Since there exists at most $k!$ ways of the naming of the lines of the arrangement, the fourth {\em for loop} introduces a multiplicative factor of $k!$ in the running time.
	
	Therefore, the overall running time of \cref{algo:listIncidenceAlgo} is
	\[O\left(F(k){k\choose{2}}!\left(k^3 + k^{k^5\log k}k!nk^2\log
              k \right) \right) \subseteq O\left(F(k) k^{k^6+3k}\right) n.\]	
	Thus the algorithm is FPT when parameterized by~$k$.  
\end{proof}

	We now consider the {\sc List-Incidence Segment Number} problem. As already mentioned, any realization with $k$ segments is itself a realization with at most $k$ lines. Thus, we can use \cref{algo:listIncidenceAlgo}, with just two slight modifications. Observe that, in the third {\em for loop} of the algorithm, the considered collection of routings of the light paths together with the drawing of $G_{>2}$ uniquely determines the segments of the representation. Therefore, we only proceed to the fourth {\em for loop} with those collections that determine at most $k$ segments. However, if we proceed, in the fourth {\em for loop}, we consider all possible namings of these segments instead of considering all possible namings of the lines of the current arrangement. Clearly, the notion of compatible drawing of $G_{>2}$ is naturally modified using the segment labels.
	Finally, note that counting the number of segments affects the
        running time by a factor that depends solely on~$k$.
	
	\subparagraph{Handling Path and Cycle Components}
	
	To complete the proof of \cref{thm:listlines-and-segments}, we now show how to modify \cref{algo:listIncidenceAlgo} to handle cycle and path components.  Somewhat
	surprisingly, paths are the most troublemaking components. 
	
	First, we consider the cycle components. A cycle
	component $C$ cannot be drawn on a line, and thus any of its drawings
	must bend in a crossing point. Then it can be described by a closed
	routing of supergaps, and we include such a routing in the
        third {\em for loop} of \cref{algo:listIncidenceAlgo}.
	The subroutine \textsc{CheckRouting} is modified as follows: We pick one of the bending points and try all possibilities which vertex of the component $C$ is placed in this crossing point and in which direction the cycle $C$ travels around the chosen cyclic routing. This adds a multiplicative factor $2|C|=O(n)$ to the overall running time of the algorithm. 
	
	Second, we consider the path components. Let $P$ be such a component. 
	 If $P$ has a \emph{universal index}, i.e., if there is an index $i$ which belongs to all $L(e), e\in P$, we can realize this path on line $\ell_i$ in some sector which does not belong to any routing nor to edges of $G_{>2}$ (in case of {\sc List-Incidence Line Cover Number}) not affecting the realizability of  the rest of the graph. Thus, such paths can be ignored for {\sc List-Incidence Line Cover Number}.
	
	However, in the {\sc List-Incidence Segment Number} problem, it may be favourable to use bends even if the path has a universal index in the lists of its vertices, as this may prevent other components of the graph to be routed along the segment corresponding to the universal index.
	Moreover, even if we decide to route the path along a segment corresponding to one of the universal indices of the path, the feasibility of the instance may depend on which index we use. However, since the segment number of a graph equals the sum of the segment numbers of its connected components, the number of components of a feasible instance is always bounded by~$k$.
	Thus, for every path component~$P$, we check whether the set
        of universal indices, i.e., $\bigcap_{e \in P}L(e)$, is
        non-empty.  If this is the case, we (i)~decide whether we
        realize~$P$
	as a single segment, and if so, (ii)~what is the index of this
	segment.  This leaves $O(k)$ possibilities to try, and for each of
	them, we delete the index of this segment from all lists of other
	components.  Then we proceed with the rest of the graph.  Altogether, we have refined the search tree by a factor of $(k+1)!$.
	
	For paths with a universal index that were chosen to be represented with bends, and for paths that must be realized with bends (since they do not have any universal index), we use the routing approach (both in case
	of {\sc List-Incidence Segment Number} and {\sc List-Incidence Line Cover Number}) as described in the case of cycle components. Therefore, processing these components affects the running time by a multiplicative factor of $O(n)$. This concludes the proof of \cref{thm:listlines-and-segments}.

\section{Open Problems}
\label{sec:open}

We have shown that segment number parameterized by segment number,
line cover number, and vertex cover number is fixed parameter
tractable.  Another interesting parameter would be the treewidth.  So
far, even for treewidth~2, efficient optimal algorithms are only known
for some subclasses
\cite{Adnan2008,samee_etal:GD08,Goessmann_etal:WG22}.

The \emph{cluster deletion number} of a graph is the minimum number of
vertices that have to be removed such that the remainder is a union of
disjoint cliques.
Clearly, the cluster deletion number is always upper-bounded by the
vertex cover number.
Is the segment number problem FPT with respect to the cluster deletion
number?

\bibliographystyle{plainurl}
\bibliography{abbrv,segment-number}

\begin{thebibliography}{10}

\bibitem{Adnan2008}
Muhammad~Abdullah Adnan.
\newblock Minimum segment drawings of outerplanar graphs.
\newblock Master's thesis, Department of Computer Science and Engineering,
  Bangladesh University of Engineering and Technology (BUET), Dhaka, 2008.
\newblock URL:
  \url{http://lib.buet.ac.bd:8080/xmlui/bitstream/handle/123456789/1565/Full%20%20Thesis%20.pdf?sequence=1&isAllowed=y}.

\bibitem{bcghvw-bcong-ESA22}
Martin Balko, Steven Chaplick, Siddharth Gupta, Michael Hoffmann, Pavel Valtr,
  and Alexander Wolff.
\newblock Bounding and computing obstacle numbers of graphs.
\newblock In Shiri Chechik, Gonzalo Navarro, Eva Rotenberg, and Grzegorz
  Herman, editors, {\em ESA 2022}, volume 244 of {\em LIPIcs}, pages 11:1--13,
  Schloss Dagstuhl, 2022. Leibniz-Zentrum f{\"u}r Informatik.
\newblock \href {https://doi.org/10.4230/LIPIcs.ESA.2022.11}
  {\path{doi:10.4230/LIPIcs.ESA.2022.11}}.

\bibitem{Berger2010}
Marcel Berger.
\newblock {\em Geometry Revealed: A {J}acob's Ladder to Modern Higher
  Geometry}.
\newblock Springer, Berlin, Heidelberg, 2010.
\newblock \href {https://doi.org/10.1007/978-3-540-70997-8\_1}
  {\path{doi:10.1007/978-3-540-70997-8\_1}}.

\bibitem{bgmn-pabep-JGAA20}
Sujoy Bhore, Robert Ganian, Fabrizio Montecchiani, and Martin N{\"{o}}llenburg.
\newblock Parameterized algorithms for book embedding problems.
\newblock {\em J. Graph Algorithms Appl.}, 24(4):603--620, 2020.
\newblock \href {https://doi.org/10.7155/jgaa.00526}
  {\path{doi:10.7155/jgaa.00526}}.

\bibitem{bgmn-paql-JGAA22}
Sujoy Bhore, Robert Ganian, Fabrizio Montecchiani, and Martin N\"{o}llenburg.
\newblock Parameterized algorithms for queue layouts.
\newblock {\em J. Graph Algorithms Appl.}, 26(3):335--352, 2022.
\newblock \href {https://doi.org/10.7155/jgaa.00597}
  {\path{doi:10.7155/jgaa.00597}}.

\bibitem{chaplick_etal:CompGeo20}
Steven Chaplick, Krzysztof Fleszar, Fabian Lipp, Alexander Ravsky, Oleg
  Verbitsky, and Alexander Wolff.
\newblock Drawing graphs on few lines and few planes.
\newblock {\em J. Comput. Geom.}, 11(1):433--475, 2020.
\newblock \href {https://doi.org/10.20382/jocg.v11i1a17}
  {\path{doi:10.20382/jocg.v11i1a17}}.

\bibitem{cflrvw-cdgfl-JGAA23}
Steven Chaplick, Krzysztof Fleszar, Fabian Lipp, Alexander Ravsky, Oleg
  Verbitsky, and Alexander Wolff.
\newblock The complexity of drawing graphs on few lines and few planes.
\newblock {\em J. Graph Algorithms Appl.}, 27(6):459--488, 2023.
\newblock \href {https://doi.org/10.7155/jgaa.00630}
  {\path{doi:10.7155/jgaa.00630}}.

\bibitem{chenKanjXia:tcs10}
Jianer Chen, Iyad~A. Kanj, and Ge~Xia.
\newblock Improved upper bounds for vertex cover.
\newblock {\em Theor. Comput. Sci.}, 411(40-42):3736--3756, 2010.
\newblock \href {https://doi.org/10.1016/j.tcs.2010.06.026}
  {\path{doi:10.1016/j.tcs.2010.06.026}}.

\bibitem{DBLP:conf/gd/CornelsenLGGKW23}
Sabine Cornelsen, Giordano {Da Lozzo}, Luca Grilli, Siddharth Gupta, Jan
  Kratochv{\'{\i}}l, and Alexander Wolff.
\newblock The parametrized complexity of the segment number.
\newblock In Michael~A. Bekos and Markus Chimani, editors, {\em GD 2023},
  volume 14466 of {\em LNCS}, pages 97--113, Cham, 2023. Springer.
\newblock \href {https://doi.org/10.1007/978-3-031-49275-4\_7}
  {\path{doi:10.1007/978-3-031-49275-4\_7}}.

\bibitem{dujmovic_etal:compGeo07}
Vida Dujmovi{\'c}, David Eppstein, Matthew Suderman, and David~R.\ Wood.
\newblock Drawings of planar graphs with few slopes and segments.
\newblock {\em Comput. Geom.}, 38:194--212, 2007.
\newblock \href {https://doi.org/10.1016/j.comgeo.2006.09.002}
  {\path{doi:10.1016/j.comgeo.2006.09.002}}.

\bibitem{durocher/mondal:compGeo2019}
Stephane Durocher and Debajyoti Mondal.
\newblock Drawing plane triangulations with few segments.
\newblock {\em Comput. Geom.}, 77:27--39, 2019.
\newblock \href {https://doi.org/10.1016/j.comgeo.2018.02.003}
  {\path{doi:10.1016/j.comgeo.2018.02.003}}.

\bibitem{durocherMNW:jgaa13}
{Stephane} {Durocher}, {Debajyoti} {Mondal}, {Rahnuma Islam} {Nishat}, and
  {Sue} {Whitesides}.
\newblock A note on minimum-segment drawings of planar graphs.
\newblock {\em J. Graph Algorithms Appl.}, 17(3):301--328, 2013.
\newblock \href {https://doi.org/10.7155/jgaa.00295}
  {\path{doi:10.7155/jgaa.00295}}.

\bibitem{DBLP:journals/combinatorica/FrankT87}
Andr{\'{a}}s Frank and {\'{E}}va Tardos.
\newblock An application of simultaneous diophantine approximation in
  combinatorial optimization.
\newblock {\em Combinatorica}, 7(1):49--65, 1987.
\newblock \href {https://doi.org/10.1007/BF02579200}
  {\path{doi:10.1007/BF02579200}}.

\bibitem{Goessmann_etal:WG22}
Ina Goe{\ss}mann, Jonathan Klawitter, Boris Klemz, Felix Klesen, Stephen~G.
  Kobourov, Myroslav Kryven, Alexander Wolff, and Johannes Zink.
\newblock The segment number: Algorithms and universal lower bounds for some
  classes of planar graphs.
\newblock In Michael~A. Bekos and Michael Kaufmann, editors, {\em {WG} 2022},
  volume 13453 of {\em LNCS}, pages 271--286, Cham, 2022. Springer.
\newblock \href {https://doi.org/10.1007/978-3-031-15914-5\_20}
  {\path{doi:10.1007/978-3-031-15914-5\_20}}.

\bibitem{hueltenschmidtKMS:jgaa18}
{Gregor} {Hültenschmidt}, {Philipp} {Kindermann}, {Wouter} {Meulemans}, and
  {André} {Schulz}.
\newblock Drawing planar graphs with few geometric primitives.
\newblock {\em J. Graph Algorithms Appl.}, 22(2):357--387, 2018.
\newblock \href {https://doi.org/10.7155/jgaa.00473}
  {\path{doi:10.7155/jgaa.00473}}.

\bibitem{igamberdievMS:jgaa17}
{Alexander} {Igamberdiev}, {Wouter} {Meulemans}, and {André} {Schulz}.
\newblock Drawing planar cubic 3-connected graphs with few segments: Algorithms
  \& experiments.
\newblock {\em J. Graph Algorithms Appl.}, 21(4):561--588, 2017.
\newblock \href {https://doi.org/10.7155/jgaa.00430}
  {\path{doi:10.7155/jgaa.00430}}.

\bibitem{DBLP:journals/mor/Kannan87}
Ravi Kannan.
\newblock Minkowski's convex body theorem and integer programming.
\newblock {\em Math. Oper. Res.}, 12(3):415--440, 1987.
\newblock \href {https://doi.org/10.1287/moor.12.3.415}
  {\path{doi:10.1287/moor.12.3.415}}.

\bibitem{kindermann_etal:GD19}
Philipp Kindermann, Tamara Mchedlidze, Thomas Schneck, and Antonios Symvonis.
\newblock Drawing planar graphs with few segments on a polynomial grid.
\newblock In Daniel Archambault and Csaba~D. T{\'{o}}th, editors, {\em {GD}
  2019}, volume 11904 of {\em LNCS}, pages 416--429, Cham, 2019. Springer.
\newblock \href {https://doi.org/10.1007/978-3-030-35802-0\_32}
  {\path{doi:10.1007/978-3-030-35802-0\_32}}.

\bibitem{kindermannMS:JGAA2012}
{Philipp} {Kindermann}, {Wouter} {Meulemans}, and {André} {Schulz}.
\newblock Experimental analysis of the accessibility of drawings with few
  segments.
\newblock {\em J. Graph Algorithms Appl.}, 22(3):501--518, 2018.
\newblock \href {https://doi.org/10.7155/jgaa.00474}
  {\path{doi:10.7155/jgaa.00474}}.

\bibitem{krw-dgfcf-JGAA19}
Myroslav Kryven, Alexander Ravsky, and Alexander Wolff.
\newblock Drawing graphs on few circles and few spheres.
\newblock {\em J. Graph Algorithms Appl.}, 23(2):371--391, 2019.
\newblock \href {https://doi.org/10.7155/jgaa.00495}
  {\path{doi:10.7155/jgaa.00495}}.

\bibitem{DBLP:journals/mor/Lenstra83}
Hendrik~W. {Lenstra Jr.}
\newblock Integer programming with a fixed number of variables.
\newblock {\em Math. Oper. Res.}, 8(4):538--548, 1983.
\newblock \href {https://doi.org/10.1287/moor.8.4.538}
  {\path{doi:10.1287/moor.8.4.538}}.

\bibitem{orw-ysng-GD19}
Yoshio Okamoto, Alexander Ravsky, and Alexander Wolff.
\newblock Variants of the segment number of a graph.
\newblock In Daniel Archambault and Csaba~D. T{\'o}th, editors, {\em GD 2019},
  volume 11904 of {\em LNCS}, pages 430--443, Cham, 2019. Springer.
\newblock URL: \url{http://arxiv.org/abs/1908.08871}, \href
  {https://doi.org/10.1007/978-3-030-35802-0\_33}
  {\path{doi:10.1007/978-3-030-35802-0\_33}}.

\bibitem{Renegar92a}
James Renegar.
\newblock On the computational complexity and geometry of the first-order
  theory of the reals. {Part I: Introduction. Preliminaries. The} geometry of
  semi-algebraic sets. {The} decision problem for the existential theory of the
  reals.
\newblock {\em J. Symb. Comput.}, 13(3):255--299, 1992.
\newblock \href {https://doi.org/10.1016/S0747-7171(10)80003-3}
  {\path{doi:10.1016/S0747-7171(10)80003-3}}.

\bibitem{Renegar92b}
James Renegar.
\newblock On the computational complexity and geometry of the first-order
  theory of the reals. {Part II: The} general decision problem. {Preliminaries}
  for quantifier elimination.
\newblock {\em J. Symb. Comput.}, 13(3):301--327, 1992.
\newblock \href {https://doi.org/10.1016/S0747-7171(10)80004-5}
  {\path{doi:10.1016/S0747-7171(10)80004-5}}.

\bibitem{Renegar92c}
James Renegar.
\newblock On the computational complexity and geometry of the first-order
  theory of the reals. {Part III: Quantifier} elimination.
\newblock {\em J. Symb. Comput.}, 13(3):329--352, 1992.
\newblock \href {https://doi.org/10.1016/S0747-7171(10)80005-7}
  {\path{doi:10.1016/S0747-7171(10)80005-7}}.

\bibitem{samee_etal:GD08}
Md. Abul~Hassan Samee, Md.~Jawaherul Alam, Muhammad~Abdullah Adnan, and
  Md.~Saidur Rahman.
\newblock Minimum segment drawings of series-parallel graphs with the maximum
  degree three.
\newblock In Ioannis~G. Tollis and Maurizio Patrignani, editors, {\em GD 2008},
  volume 5417 of {\em LNCS}, pages 408--419, Cham, 2008. Springer.
\newblock \href {https://doi.org/10.1007/978-3-642-00219-9\_40}
  {\path{doi:10.1007/978-3-642-00219-9\_40}}.

\bibitem{Schaefer09}
Marcus Schaefer.
\newblock Complexity of some geometric and topological problems.
\newblock In David Eppstein and Emden~R. Gansner, editors, {\em GD 2009},
  volume 5849 of {\em LNCS}, pages 334--344, Cham, 2010. Springer.
\newblock \href {https://doi.org/10.1007/978-3-642-11805-0\_32}
  {\path{doi:10.1007/978-3-642-11805-0\_32}}.

\bibitem{s-dgfa-JGAA15}
Andr{\'e} Schulz.
\newblock Drawing graphs with few arcs.
\newblock {\em J. Graph Algorithms Appl.}, 19(1):393--412, 2015.
\newblock \href {https://doi.org/10.7155/jgaa.00366}
  {\path{doi:10.7155/jgaa.00366}}.

\bibitem{ss-isglc-JCTSB20}
Alex Scott and Paul Seymour.
\newblock Induced subgraphs of graphs with large chromatic number. {VI}.
  {B}anana trees.
\newblock {\em J. Combin. Theory Ser. B}, 145:487--510, 2020.
\newblock \href {https://doi.org/10.1016/j.jctb.2020.01.004}
  {\path{doi:10.1016/j.jctb.2020.01.004}}.

\end{thebibliography}

\end{document}